\newtheorem{theorem}{Theorem}[section]
\newtheorem{lemma}[theorem]{Lemma}
\newtheorem{corollary}[theorem]{Corollary}
\newtheorem{definition}[theorem]{Definition}
\def\qed{ \mbox{\ \vrule width1ex height1em depth0cm}
\global\advance\proofqeded by 1 }
\newenvironment{proof}{\proofstart}{\ifnum\proofqeded=\proofended\qed\fi \global\advance\proofended by 1
  \medskip}
\def\proofstart{\@ifnextchar[{\@oprf}{\@nprf}}
\def\@oprf[#1]{\paragraph{Proof #1:~}}
\def\@nprf{\paragraph{Proof:~}}
\newcommand{\Exp}{\operatornamewithlimits{\mathbb{E}}}
\newcommand{\pcube}{\operatorname{\{\pm1\}}}
\newcommand{\eps}{\epsilon}
\newcommand{\sgn}{\operatorname{\text{{\rm sgn}}}}
\newcommand{\calf}{{\cal F}}
\newcommand{\Fix}{{\sf Fix}}
\newcommand{\defeq}{\stackrel{{\mbox{\tiny def}}}{=}}
\newcommand{\re}{\mathbb{R}}
\newcommand{\ignore}[1]{}
\newcommand{\comment}[1]{\emph{#1}}
\renewcommand{\comment}[1]{}
\renewcommand{\paragraph}[1]{\noindent {\bf #1}}
\begin{document}

\title{Robust Regulatory Networks}

\author{Arnab Bhattacharyya\footnote{Authors are in alphabetical order and contributed equally to the paper} \footnote{\texttt{abhatt@mit.edu}.  Research supported in part by a DOE Computational Science Graduate Fellowship and NSF Grants 0514771, 0728645 and 0732334.}  \and Bernhard Haeupler \footnotemark[1] \footnote{\texttt{haeupler@mit.edu}.  Research supported by an Akamai MIT Presidential Fellowship}}
\date{}

\maketitle
\vspace{-1em}
\begin{center}
{ Massachusetts Institute of Technology\\ Computer Science and Artificial Intelligence Laboratory\\ Stata Center, 32 Vassar Street, Cambridge, MA 02139} 
\end{center}

\vspace{3em}

\begin{abstract}
One of the characteristic features of genetic networks is their inherent robustness, that is, their ability to retain functionality in spite of the introduction of random errors.  In this paper, we seek to better understand how robustness is achieved and what functionalities can be maintained robustly.  Our goal is to formalize some of the language used in biological discussions in a reasonable mathematical framework, where questions can be answered in a rigorous fashion.  These results provide basic conceptual understanding of robust regulatory networks that should be valuable independent of the details of the formalism.

We model the gene regulatory network as a boolean network, a general and well-established model introduced by Stuart Kauffman.  A boolean network is said to be in a viable configuration if the node states of the network at its fixpoint satisfy some pre-specified constraint.  We define how mutations affect the behavior of the boolean network, and we say a network is robust if most random mutations of the model reach a viable configuration.

We first study the case when the boolean network is specified by a directed acyclic graph.  Random mutations induce a neighborhood around the configuration that would be reached if there were no mutations.  We show that for the case of acyclic networks, this neighborhood is a bijective transformation of the usual Hamming neighborhood.  A robust acyclic network chooses the bijection so that most of the neighborhood lies in the space of viable configurations.  The greater the degree of the network, the more complex the bijection is allowed to be and thus the greater the possibility of robustly satisfying constraints.

Next, we study networks where directed cycles are present.  We show that cyclic networks can make the volume of the neighborhood smaller, by mapping different errors in the network to the same final configuration.  Thus, cyclic networks can be dramatically more powerful with respect to robustness than acyclic networks.  Also, we explicitly describe a large class of constraints for which cyclic networks provide robustness.
\end{abstract}

\paragraph{Keywords:}
robustness; regulatory networks; mathematical modelling; boolean networks

\mbox{}\\

\section{Introduction: Understanding Robustness}

One of the hallmark features of life is its diversity.  To solve the same basic problems of survival and reproduction, nature has devised a scintillating array of solutions, each remarkably different from others in many aspects.  In the long term, such biological innovation is necessary, because changing environmental conditions mean that some solutions will become defunct while others will become more advantageous. The fundamental question that arises then is how does innovation in
biology arise?

To this end, Wagner in \cite{wag05b} defined a biological system to be {\em evolvable} ``if it can acquire novel functions through genetic change, functions that help the organism survive and reproduce.''  What exactly are the features of a system that displays evolvability remain a mystery.  The question of evolvability can also be phrased in terms of the genotype-phenotype map.  The genotype of an organism is the hereditary information contained in the genome, while the phenotype is the set of properties actually exhibited by the organism and acted upon by natural selection.  For evolvability, the map between the genotype and phenotype must be such that random mutations to the genotype can possibly ``improve'' the phenotype so that novel functionality is acquired.  In nature, this map is not the identity map: there is a complex translation process that constrains the phenotypes that can be expressed while also encouraging variability.  One can ask then for the properties of the genotype-phenotype map for evolvable biological systems.

In this paper, we rigorously study one specific aspect of evolvability:
{\em robustness}.  At a first glance, robustness and evolvability seem to
be diametrically opposed concepts.  Robustness refers to a system's ability to
retain functionality in the presence of changes, while evolvability
refers to a system's ability to acquire new functionality.  To
resolve this dilemma, let us define robustness more precisely,
following \cite{wag05b}.  Given a specific phenotypic property $f$, we
say a mutation to the genotype is {\em neutral with respect to $f$}
if that mutation does not affect possession of the property $f$.  We say a system is
{\em robust with respect to $f$} if the vast majority of mutations are
neutral with respect to $f$. For instance, the phenotypic property to
be preserved could be the RNA secondary structure, which a prerequisite for RNA function.  Then genetic change in an RNA
molecule that is neutral with respect to RNA secondary structure would
preserve the RNA's secondary structure but potentially change other
aspects.  Another example (also discussed in \cite{wag05b}) is cryptic
variation in developmental genes.  These mutations preserve the
development of complex organs, such as the eye and legs, under usual
circumstances but could drastically alter their development in alternate environments \cite{ruthlind}.  In this
case, the property being preserved is development of these organs
in a specific environmental and genetic background.

Now, we can explain  why robustness with respect to a
given phenotypic property increases evolvability.  If a system is
robust with respect to some primary property $f$, then, since most
mutations are neutral with respect to $f$, the system can express many
phenotypes satisfying $f$ and thus has a higher chance of encountering
a phenotype that satisfies some other property $g$.  Thus, novel
phenotypes can be discovered while not destroying previous
functionalities already achieved.  Gould called this process {\em
  exaptation} (\cite{gould}) to refer to organismal features that
become adaptations to new conditions long after they have already
arisen to achieve some more basic functionality.  In other words,
robustness allows a system to accumulate a reservoir of neutral
mutations and thus has a greater potential for innovation with respect to
new unexplored functionalities. 

Evidence for robustness with respect to particular phenotypic properties is abundant
throughout nature.  Although we describe robustness as
resilience to mutations in the genome, similar notions also exist for
changes at all levels of organization.  Proteins tolerate thousands of
amino acid changes, metabolic networks continue to function after
removal of intermediate steps, gene regulatory networks are unaffected
by alteration of gene interactions, genetic changes in embryonic
development often hardly affect the viability of the adult organism,
and microbes and higher organisms can tolerate complete elimination of
many genes.  Organization in biological structures is incredibly
complex, and it is often a matter of great mystery how such robustness
can be achieved at all.

This paper is part of an effort to better understand the robustness
property and the environments under which robust solutions are
possible.  Our goal is to formalize some of the language used in
biological discussions in a reasonable mathematical framework, where
questions can be answered in a rigorous fashion. 

To simplify the discussion, let us only look at the case of robustness
of the phenotype to mutations in the genotype (although much of our
results can potentially be applied to robustness at other levels of
organization).  The genotype takes the form of a {\em regulatory gene
network}.  The expression level of each gene is functionally related
to the expression level of some other genes.  Thus, if the expression
level of one gene is changed, expression levels of other genes are
also modified according to the regulatory connections. To model these
networks we quantize gene expressions by two levels, {\sc ON} and {\sc
  OFF} and describe their interaction via boolean networks, introduced
by Kauffman \cite{kf69,kf93,kf95}. This is a very general and
well-established model.  The phenotype expressed by a particular
genotype (represented by the regulatory network) is described by the
stable configurations of the network; this correspondence has has been verified using
simulation and evaluating gene expression data  \cite{AlbertOthmer,MendozaAlvarez,Espinosa-Soto,HuangIngber,Huang,Gardner}.

In this paper we take this widely used biological model and try to
identify, describe and understand conditions under which robust gene
networks can work -- i.e. which conditions can be robustly satisfied by
regulatory networks. We furthermore want to investigate how
such robust solutions can be found -- an everyday task of nature. We
believe that answers to these questions are essential for
understanding the biology of evolution, the power and structure of
regulatory (gene) networks and their interaction with mutations.

\paragraph{\textbf{Organization.  }}  We continue in Section \ref{sec:defs} by
giving formal definitions for our model and state our main results,
while relating them back to the biological motivation discussed above.
In Section \ref{sec:results}, we give a more detailed discussion of our results
along with proofs.  Finally, we end with some conclusions and
suggestions for future work in this area.

\section{The Model: Networks and Mutations\label{sec:defs}}

\subsection{Definitions}

In this section we give the formal definition of the model we will
investigate; we biologically justify the details of our formalism in Section \ref{sec:defmot}. 

The phenotype is specified by gene expression levels, modeled as $n$ boolean\footnote{A boolean domain has exactly two values with interpretations as \textit{True} and \textit{False}. We use these interpretations with the functions $\vee$ (OR), $\wedge$ (AND) and $\bar{\ }$ (NOT) interchangeably with the values $1$ (\textit{True}) and $-1$ (\textit{False}).} characters
$x_1,\dots,x_n \in \pcube$. The phenotype $(x_1,\dots,x_n)$ is said to be {\em viable} exactly
when $f(x_1,\dots,x_n) = 1$ where $f : \pcube^n \to \pcube$ is the
constraint determined by the environment or genetic background to be
satisfied.   We call $f$ the {\em objective function}. 

Next, we formalize the definition of the genotype  by
asserting that the genotype encodes a boolean network with the gene
expressions as boolean variables. Such a {\em boolean network
  $N=(x,u_1,\dots,u_n)$} is specified by $n$ boolean variables $x = (x_1,\dots,
x_n)  \in \pcube^n$ and $n$ corresponding {\em update functions}
$u_1,\dots, u_n: \pcube^n \to \pcube$ describing how the variable $x_i$ depends on
all other variables.

Every network $N=(x,u_1,\dots,u_n)$ induces a directed graph
$G_N=(V_N,E_N)$ on its variables $V_N = \{x_1,\ldots,x_n\}$ where 
$(x_i,x_j)$ is an edge in $E_N$ iff the variable $x_i$ has an influence on the
update function $u_j$ of node $x_j$ (i.e., $\exists
x_1, \ldots, x_{j-1}, x_{j+1}$\\$, \ldots, x_n$ such that 
$u_j(x_1,\ldots,x_j=-1,\ldots,x_n) \neq
u_j(x_1,\ldots,x_j=1,\ldots,x_n)$) The graph $G_N$ describes how
changes can in principle ``spread'' through the network $N$.

Next we want to describe how exactly the states of the boolean
variables of a network
change over time. For this we define the {\em configuration} of a
boolean network as an assignment $\alpha \in \pcube^n$ to all its
variables (i.e. $\forall i=1,\ldots,n \ :\  x_i = \alpha_i$). Then we
define a dynamic system for every boolean network $N=(x,u_1,\dots,u_n)$ together
with an initial configuration $\alpha$ by inductively defining a
sequence of configurations: 
\begin{align*}
x(1)   &\ = \ \  \alpha \\
x(t+1) &\ = \ \ (u_1(x(t)),\ldots,u_n(x(t))) & \forall t \geq 1
\end{align*}
This dynamic system gives a sequence of configurations for every time
$t$. For time $t=1$ this is the initial configuration $x(1)=\alpha$
and for later times the configuration $x(t)$ is formed by applying the
update functions on the last configuration.  Note that the states of
all the nodes are updated synchronously.

Since the configuration space is finite and the dynamics of the
network are deterministic, the network will eventually fall into a
previously visited  configuration, after which the configuration
dynamics become periodic.  This cyclic trajectory is called an {\em
  attractor}.  If the attractor is just one configuration -- a cycle
of length 1 -- it is called a {\em fixpoint}.  It is clear that given
an initial configuration $\alpha$, if a network does reach a fixpoint
starting from $\alpha$, it is unique.  Assuming the network $N$ does reach
a fixpoint starting from $\alpha$, we denote it by $\Fix(N,\alpha)$;
if it does not, $\Fix(N,\alpha)$ is undefined.

Next, we specify how mutations change the boolean network. As
motivated below in Section \ref{sec:defmot}, we look at mutations
modifying the update functions that do not change the topology $G_N$
of the network.  We say that a network $N'$ is a {\em mutation of a network}
$N$ if $N = (x,u_1,\dots,u_n)$ and $N' = (x, v_1,\dots,v_n)$ with
each $v_i$ either equal to $u_i$ or $-u_i$.   Note that inverting an update function does not
change whether one variable has an influence on an other variable, giving
always $G_N = G_{N'}$.  Figure \ref{fig:network} shows an example of a network and a
mutation of it.  Given a  {\em mutation parameter}   $\eps \in
(0,\frac{1}{2})$, we define an {\em 
  $\eps$-mutation} of a boolean network $N=(x,u_1,\dots,u_n)$ to be a random variable
denoting a boolean network $N'=(x,v_1,\dots,v_n)$ with the same variables but
changed update functions. Specifically, independently for every $i \in
[n]$, the new update function $v_i$ is equal to its original $u_i$ with probability
$1-\eps$ and its complement $v_i = -{u}_i$ with probability
$\eps$.

\begin{figure}
\begin{center}
\includegraphics[width=15cm]{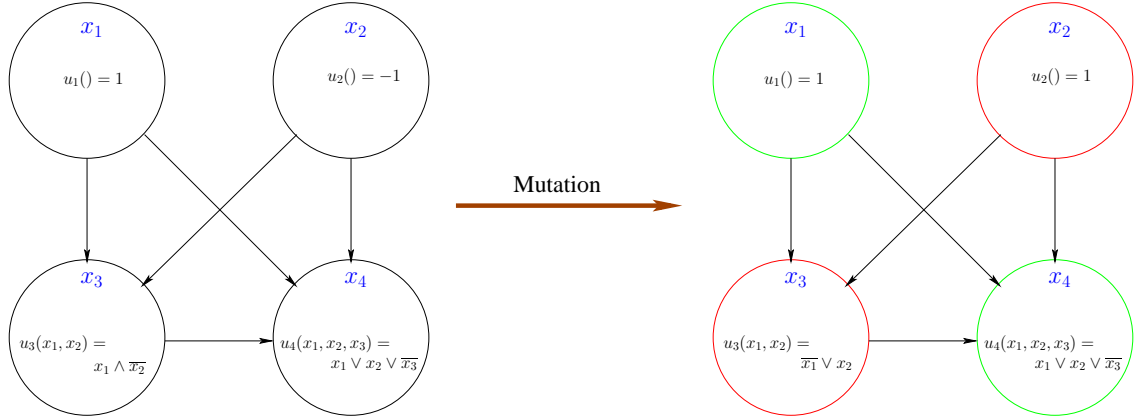}
\end{center}
\caption{{The boolean network
    on the left reaches the final configuration $(x_1,x_2,x_3,x_4) =
    (1,-1,1,1)$.  The mutated    network on the right reaches the
    final configuration $(x_1,x_2,x_3,x_4) = (1,1,1,1)$.}}
\label{fig:network}
\vspace{.5cm}
\end{figure}

Finally, we want to quantify how well a boolean network satisfies a
given objective function in the presence of mutations. Formally, for a
mutation parameter $\eps \in (0,\frac{1}{2})$ and a survival probability
$\delta \in [0,1]$ we say that a network $N$ with an initial
configuration $\alpha \in \pcube^n$ is {\em $(\eps,\delta)$-robust
  with respect to} an objective function $f : \pcube^n \to \pcube$
iff the probability that the fixpoint $x(\infty) = \Fix(N',\alpha)$
reached by an $\eps$-mutation $N'$ of $N$ satisfies $f(x(\infty)) =
1$ is at least $\delta$.  Note that this definition only makes sense
if the network $N'$ reaches a fixpoint starting from $\alpha$; see Section \ref{sec:whyfix}
for biological motivation.  We say
that the network $N$ is {\em optimally $\eps$-robust with respect to}
$f$ if it achieves the largest survival probability $\delta$ among all
networks that are $(\eps,\delta)$-robust with respect to $f$. \\ 
The central question we address in this paper is whether given an
objective function $f$, there exists a network $N$ that is $(\eps,
\delta)$-robust with respect to $f$ with $\eps$ a constant and
$\delta$ very close to $1$.  In order to formalize ``very close'', we
parametrize the objective function by $n$, the number of boolean
characters, and then desire that $\delta$ goes to $1$ asymptotically
as $n$ approaches infinity (holding $\eps$ constant).
For a mutation rate $\eps \in (0,\frac{1}{2})$, a family of objective
functions $\{f_n : \pcube^n \to  \pcube\}_{n=1,\dots}$ is said
to be {\em $\eps$-robustly expressible}  iff, for every $n$, there
exists a boolean network $N_n$ on $n$ variables and a configuration
$\alpha_n \in \pcube^n$ such that $N_n$ with initial configuration
$\alpha_n$ is $(\eps,\delta_n)$-robust with respect to $f_n$, where the
survival probabilities $\delta_n$ go to $1$ when $n$ approaches infinity. A family of
objective functions is is said to be {\em robustly expressible} if it is
$\eps$-robustly expressible for some constant $\eps \in
(0,\frac{1}{2})$.  We will sometimes abuse notation by saying that an objective
function (instead of a family of objective functions) is robustly
expressible; here, the function is implicitly parametrized by its arity.

\subsection{The Greater Picture\label{sec:defmot}} 
Let us reconnect the formal definitions above in Section \ref{sec:defs} to the biological
motivation discussed in the introduction.  The genotype is described
by a boolean network, which mimics the dynamics of the genetic
regulatory network.  Boolean networks are a well-established model to
describe regulatory network interactions (see, for instance, the study
of floral organ development using such a model in
\cite{Espinosa-Soto}).  These network models are often devised by
discretizing nonlinear continuous models but, for our purposes, this
is inessential.  For a given genotype represented as a boolean
network, the phenotype is assumed to be the gene expression levels, given
by the stable configuration of the network.  As mentioned earlier, the 
correspondence between the gene expression levels and the stable configuration
of the network has been shown to be valid in several biological experiments (see e.g.
\cite{AlbertOthmer,MendozaAlvarez,Espinosa-Soto,HuangIngber,Huang,Gardner}).  
Nevertheless one should note that modelling the phenotype by the gene expression levels 
is a gross simplification of reality.  For example, the environment and other epigenetic
factors also often play an important role.  But as a first attempt at a systematic 
understanding of the robustness of regulatory networks, our model should suffice.

The particular phenotypic property with respect to which robustness is
ascertained is captured in the boolean objective function.  Note that
the objective function could be highly complicated.  For example, if
the objective function determines whether a piece of RNA forms a particular secondary structure, 
it would have to encode a procedure for determining the
secondary structure from a given RNA primary structure (a problem for which no
efficient algorithms are known in general).  As environmental and
background genetic conditions can be very complex, it is difficult to
say anything specific about the structure of the objective functions
that arise in nature.

In our analysis of robustness, we hold the mutation rate constant and let the number of characters in the
phenotype grow arbitrarily large, since in reality the number of genes is very large numerically and, in many
models, the mutation rate is independent of the number of genes.  
We discuss some further modelling issues below.\\
\subsubsection{Degree of Regulatory Networks.}  In nature, regulatory
interactions between genes is implemented 
through the presence of regulatory regions in the genome.  Regulatory
proteins, such as promoters and inhibitors, bind to this region and
either encourage or discourage expression of the associated gene.  The
regulatory region is a short stretch of DNA and so it is not feasible
to have too many regulatory proteins binding to a single regulatory
region simultaneously.  In our model, this means that the in-degree (the number of adjacent
incoming edges) of nodes in the boolean network representing the genotype should be
small.  For a boolean network $N$, we term the maximum in-degree of a
node in $G_N$ the {\em degree} of the network $N$.  When the degree is $0$, then the
network essentially consists of a static configuration, which does not
seem too interesting from a biological perspective.  And when the
degree is $n$, a node in the network can potentially interact with
every other node in the network, an impractical situation.  We will be
interested then in the tradeoff between degree and robustness of
boolean networks.

\subsubsection{Attractors of Boolean Networks.\label{sec:whyfix}}  As noted above, we
associate phenotypes with fixpoints of boolean 
networks. But clearly, boolean networks can also reach a cycle instead
of a fixpoint as an attractor.  In this paper, we do not consider
such networks to robustly express any objective function.  The primary
reason for this restriction is that in most modelling of real
biological systems by boolean networks, the attractors reached are
found to be fixpoints rather than cycles \cite{Espinosa-Soto}.  
There are instances in the literature, such as \cite{Aldana}, where
the phenotypes are taken to be attractors, whether they be
fixpoints or cycles.  But it is not clear if such a setup is indeed
reasonable from the perspective of modelling biological systems.

\subsubsection{Mutation Models.}  The literature shows many approaches for
modelling how mutations act on the genotype, or the boolean network in our model.  Mutations could
modify the network by changing the adjacency relations (as in 
\cite{BraunewellBornholdt}), changing the regulatory interactions (as
in \cite{SzejkaDrossel}), or duplicating and deleting nodes (as in
\cite{Aldana}). For this paper we want to investigate the case when
a mutation on the genotype modifies the regulatory interactions
between nodes by changing selected update functions to their
complements.  We believe that such errors capture the most common types
of mutations in nature. These mutations to the boolean network can
either be genetic changes passed down from one generation to the next,
or be due to environmental disturbances occurring in a single individual.

It is important to notice that in our setting, some of the other
choices of mutation models do not give interesting results.  For
instance, if 
errors are only on edges, then a robust network would be a static
assignment, i.e., a network with no edges, which is not so
interesting.  Also, if mutations change the update functions
arbitrarily, then the graph induced by the network changes arbitrarily
which would make robustness with respect to non symmetric objective
functions impossible to achieve.  Additionally, it is not very
biologically motivated to consider a gene changing its entire set of
regulators after one mutation.
Thus, for a mutation model to be mathematically interesting and
biologically relevant, we require that the network robustness should
not decrease if the network degree is increased and that if $N'$ is a
mutation of $N$, $G_{N'}$ should be somehow very closely related to
$G_N$.  In this paper, our mutation model obeys $G_N = G_{N'}$.

For other mutational models which do give interesting results, such as
errors on both edges and nodes, we believe that many of the results
presented here apply in spirit to those models as well and, also, that
many of the techniques we describe for constructing robust networks
have analogues in the alternate models.

\subsubsection{Constructivity of Robust Networks\label{sec:algo}}  Robust
expressibility of an objective function just asserts that there 
exists a network robustly expressing it.  However, for nature to be
able to use the robustness property, it needs to be able to construct
the network in some fashion.  Perhaps it starts with a non-robust
network and makes small changes to it to drive it toward robustness,
in an evolutionary process, or perhaps it uses some other procedure,
but a basic requirement for the robust expressibility property to be
useful is that the robust network be {\em efficiently constructible}.
More precisely, for an objective function $f : \pcube^n \to \pcube $,
we say that a network $N$ robust with respect to $f$ is efficiently
constructible if there is a polynomial time algorithm that has access
to $f$ as an oracle and that outputs a description of $N$.   For an
arbitrary objective function, one cannot hope for efficient
constructibility.  However, one would like to show efficient
constructibility for objective functions belonging to special families
which have small description size.  Also, note that efficient
constructibility is a weak condition to impose, since nature might be
restricted to a weak computational model.

\subsection{Previous Work}

The study of evolvability and the origin of novelty in biological
systems is an intensively studied area in biology. 
\textit{The Plausibility of Life} (\cite{kg06}) by Kirschner and Gerhart is a great introduction to the
current understanding in this area, containing pointers to many
relevant articles in the field.  The
connection between robustness and evolvability is described well in
\cite{wag05b} and in \cite{ciliberti}.

Boolean networks were originally introduced by Kauffman in \cite{kf69}.
In \textit{The Origins of Order} (\cite{kf93}), Kauffman explains his position that biological systems display
the properties of an ensemble of random boolean networks with
parameters that make them lie at the `edge of chaos', that is, near a
statistical phase transition.   The concept of designing boolean
networks to solve specific problems is a newer idea,
discussed by Hasty et. al. in \cite{hasty} for example. Genetic
regulatory networks have been modelled by boolean networks in
many parts of the literature now (\cite{AlbertOthmer,MendozaAlvarez,Espinosa-Soto,HuangIngber,Huang,Gardner}).

Robustness of boolean networks against various types of faults have
also been under investigation.  \cite{Aldana,SevimRikvold} study the
resilience of 
random boolean networks to mutations in the update functions, while
\cite{BraunewellBornholdt} and \cite{SzejkaDrossel} study resilience
of single boolean networks to other types of faults.  These papers
offer numerical/experimental evidence in favor of robustness and do
not study how robustness arises formally.  In a non-biological
context, robustness was also studied by Hornby in \cite{hornby} in the
context of developing programs expressing the design of furnitures!  He was
interested in representations of tables that one could mutate and
still retain a table generating program (but with other potentially
useful features).  Finally, evolvability of evolutionary programs was
studied by Reisinger and Miikulainen in \cite{reis}, but  in an
empirical fashion.  

As far as we know, this is the first rigorous
investigation of robustness and evolvability.  As mentioned above,
there has been a stream of previous papers examining theoretical
models for the regulatory system and then showing through simulation
how robustness can arise.  What we believe is novel in our current
work is that we obtain a mathematical understanding of how the level of
robustness is related to the structure of the regulatory network
model.  There have been
previous papers which have suggested the need for a formal
study, such as \cite{evolevol} and \cite{bigsurvey}.  
Furthermore, the philosophical underpinnings of our study are
slightly different from most earlier work.  Most of the previous papers
model the genotype to be uniformly generated by a process that is
described by a few parameters.  For instance, in the standard Kauffman
model, random boolean networks ({\em $NK$ networks}) are uniformly
generated by a probabilistic process that takes as inputs a parameter
$N$ for the number of nodes and a parameter $K$ for the degree of
each node. On the other
hand, the genotype model in our work is far richer.  We allow the
genotype structure to be as complicated as desired and then optimize for
robustness.  The difference between the two approaches is
characteristic of the difference between the methodologies of physics
and computer science.  Carlson and Doyle
\cite{carlsondoyle1,carlsondoyle2,carlsondoyle3} have introduced a 
conceptual framework, known as {\em Highly Optimized Tolerance}, that argues that biological systems are highly structured and
optimized for robustness and that they must be described by a large number
of parameters; our work can be loosely viewed as fitting into this
framework.

\subsection{Our Results}

We start our investigation of robustness by focusing on the special
class of {\em acyclic boolean networks}.  These networks have the
feature that they are guaranteed to reach fixpoints
starting from any initial configuration.  They are also mathematically
easier to handle than general boolean networks.  Although structurally
quite simple, we show in Section \ref{sec:acyclic}  that they already
are, very often, more robust than simple static assignments.  In
Section \ref{sec:acycdef}, we start by formally defining
acyclic boolean networks and characterize them explicitly
algebraically.  Then, in Section \ref{sec:dectrees}, we discuss a
connection between acyclic networks robust with respect to an
objective function and decision trees for that function.  Using this
relationship, we give a procedure for constructing the
optimally robust acyclic boolean network with respect to an objective
function in time quasipolynomial in the
truth-table size of the function\footnote{In fact, as discussed in
  Section \ref{sec:algo}, efficient constructibility requires that the
optimally robust network be found in time polynomial in $n$, not the truth-table size of $2^n$.  Using
the connection with decision trees, we show efficient constructibility
for symmetric functions, which can be represented concisely.}. In
Section \ref{sec:neces}, the algebraic characterization from Section \ref{sec:acycdef} is used to
show that if a low-degree acyclic network is robust with respect to an
objective function, then the configurations expressed by the network
lie with high probability in an efficiently learnable subset of the
phenotype domain.  This is interesting because it suggests
heuristically that the need to achieve robustness necessarily
constrains nature to generating phenotypes that lie in an  ``easily
describable'' set.  This contrasts with the  
usual belief that biological structures are somehow very complex and
inexplicably varied.
In Appendix \ref{sec:ptf}, we show that the algebraic
characterization can be used to describe a large class of function families
robustly expressible by constant degree, acyclic boolean networks.
These functions are described as polynomial threshold functions, which
are well-studied objects in computer science (see
\cite{saksptf} for a survey).  

In Section \ref{sec:separ}, we show separation between various classes
of acyclic networks in terms of their ability to robustly express
objective functions.  First, we show explicit objective
functions which can be robustly expressed by (even low degree) acyclic
networks but are far from being robustly expressible by a static
assignment.  Then, we prove that acyclic boolean
networks of sub linear degree can robustly express only a tiny fraction of
functions of constant density.  On the other hand, a random function
of constant density is expected to be robustly expressed by an acyclic
network with no degree restriction.  Hence, since we know that in
biology, many regulatory networks are of low degree, this suggests
that either  the objective functions are chosen from the
``low-complexity'' set of functions we characterized earlier or that
the networks in biology are cyclic.

In Section \ref{sec:cyclic}, we investigate cyclic networks (networks that potentially possess
feedback loops).  We are
interested in networks that always arrive at fixpoints, even in the presence of
mutations, but  when started from specific initial configurations.
We show that cyclic networks can constrain the fixpoint configurations
much more than acyclic networks.  A little more precisely, one
implication of our results in Section \ref{sec:cyclic} is that cyclic
networks can force the fixpoint configurations to lie in a set of
$2^{O(\frac{n}{\log n})}$, in contrast to acyclic networks which can only robustly express functions that are
satisfied by at least $2^{\Omega(n)}$ configurations (see Section \ref{sec:neces}).
This gap separates cyclic from acyclic networks. We also show
that any objective function which can be satisfied by fixing the values
of a few variables is also robustly expressible by cyclic networks.
Conceptually, this means that if the genome needs some of its genes
to have some fixed expression levels, then it can
expend only a logarithmic number of regulatory genes in order to make
them fixed with high probability.  We also obtain stronger results
which show, for instance, that robust expressibility is guaranteed
whenever we have a set of variables of size $O(\frac{n}{\log n})$ that can
compensate for any mutations in the remaining variables.  These examples 
stand in strong contrast to the negative results for acyclic networks and suggest the
combinatorial power of cyclic networks.   \\
\section{Results and Proofs\label{sec:results}}

\subsection{Mathematical Preliminaries}

The mathematical notation used in this paper is fairly standard.  For
integer $n \geq 1$, we use $[n]$ to denote the set $\{1,\dots,n\}$.
Below, we explain the Landau notation, and then we describe two
results on random variables that are used frequently in the following
sections.

\subsubsection{Landau Notation\label{sec:landau}}

Big O notation (i.e. $O$, $o$, $\Omega$, $\omega$) or Landau notation describes the limiting behavior of a function when the argument tends towards a particular value or infinity. Throughout this paper it is used to describe the asymptotic behavior of quantities depending on the number of network nodes $n$ with $n$ increasing towards infinity. In the rest of the paper we omit  the specification $n \rightarrow \infty$. The formal definitions are as follows:

$$f \preceq g \Longleftrightarrow f(x) = O(g(n)) \Longleftrightarrow \limsup_{n\to \infty} \left|\frac{f(x)}{g(x)}\right| < \infty$$

$$f \prec g \Longleftrightarrow f(x) = o(g(n)) \Longleftrightarrow \limsup_{n\to \infty} \left|\frac{f(x)}{g(x)}\right| = 0$$

$$f \succeq g \Longleftrightarrow f(x) = \Omega(g(n)) \Longleftrightarrow \limsup_{n\to \infty} \left|\frac{f(x)}{g(x)}\right| > 0$$

$$f \succ g \Longleftrightarrow f(x) = \omega(g(n)) \Longleftrightarrow \limsup_{n\to \infty} \left|\frac{f(x)}{g(x)}\right| = \infty$$

Intuitively e.g. $f(n) = O(g(n))$ means that the quantity $f$ is (asymptotically) not bigger than ($\preceq$) than $g$ for large networks (large $n$). All these comparisons do not take constant factors into account. 

\subsubsection{Union bound}

The Union bound is the simplest way to bound the probability of an event $X$ which can not occur without one other event $X_1, X_2, ...$ taking place, i.e. $X \subseteq X_1 \cap X_2 \cap \ldots$. Very intuitively the probability that event $X$ occurs is  bounded from above by the sum of the probabilities that one of the other events occurs, precisely:

$$P(\bigcup_i X_i) \leq \sum_i P(X_i)$$

\subsubsection{Chernoff bounds}
 
The Chernoff bound is a concentration bound for the sum of independent
variables. It states that the probability that the sum of outcomes of
independent random experiments deviates from its expectation by $\eps$
decreases exponentially in $\eps$. There are multiple forms of this
bound. The ones used throughout this paper are:

\begin{theorem}
Let $X = \sum_{i\in [n]}X_i$  where $X_i$ for each $i \in [n]$ are
independently distributed in $[0,1]$.  Then, for $\eps \in (0,1)$:
\begin{itemize}
\item $\displaystyle P\left(X > (1+\eps)\Exp[X]\right) \leq \exp\left(-\frac{\eps^2}{3}\Exp[X]\right)$
\item $\displaystyle P\left(X < (1-\eps)\Exp[X]\right) \leq \exp\left(-\frac{\eps^2}{2}\Exp[X]\right)$
\end{itemize}
\end{theorem}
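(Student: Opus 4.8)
The plan is to use the standard exponential-moment (Chernoff) method, treating the two tails separately but symmetrically. Write $\mu = \Exp[X]$; if $\mu = 0$ then $X = 0$ almost surely and both statements are trivial, so assume $\mu > 0$. The three ingredients are: (i) Markov's inequality applied to $e^{tX}$ (resp.\ $e^{-tX}$) for a free parameter $t > 0$; (ii) independence of the $X_i$, which turns $\Exp[e^{tX}]$ into $\prod_{i \in [n]} \Exp[e^{tX_i}]$; and (iii) a pointwise convexity bound: since each $X_i$ takes values in $[0,1]$ and $x \mapsto e^{sx}$ is convex, $e^{sX_i} \le 1 + (e^s - 1)X_i$ for every real $s$, hence $\Exp[e^{sX_i}] \le 1 + (e^s-1)\Exp[X_i] \le \exp\big((e^s-1)\Exp[X_i]\big)$, and multiplying over $i$ gives $\Exp[e^{sX}] \le \exp\big((e^s-1)\mu\big)$.

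For the upper tail, fix $t > 0$. By Markov and the bound above, $P\big(X > (1+\eps)\mu\big) \le e^{-t(1+\eps)\mu}\,\Exp[e^{tX}] \le \exp\!\big(\mu(e^t - 1 - t(1+\eps))\big)$. The exponent is minimized over $t > 0$ at $t = \ln(1+\eps)$, which yields $P\big(X > (1+\eps)\mu\big) \le \exp\!\big(\mu(\eps - (1+\eps)\ln(1+\eps))\big)$. It then remains to verify the scalar inequality $(1+\eps)\ln(1+\eps) - \eps \ge \eps^2/3$ for $\eps \in (0,1)$. This follows from the Taylor expansion $(1+\eps)\ln(1+\eps) - \eps = \sum_{k \ge 2} \frac{(-1)^k \eps^k}{k(k-1)} = \frac{\eps^2}{2} - \frac{\eps^3}{6} + \cdots$, which for $\eps \le 1$ is an alternating series with terms strictly decreasing in magnitude, so its value is at least $\frac{\eps^2}{2} - \frac{\eps^3}{6} = \frac{\eps^2}{6}(3 - \eps) \ge \frac{\eps^2}{3}$.

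For the lower tail the argument is the mirror image: for $t > 0$, $P\big(X < (1-\eps)\mu\big) = P\big(e^{-tX} > e^{-t(1-\eps)\mu}\big) \le e^{t(1-\eps)\mu} \exp\!\big((e^{-t}-1)\mu\big) = \exp\!\big(\mu(e^{-t} - 1 + t(1-\eps))\big)$, and taking $t = -\ln(1-\eps) > 0$ (again the minimizer) gives $\exp\!\big(\mu(-\eps - (1-\eps)\ln(1-\eps))\big)$. One then checks $\eps + (1-\eps)\ln(1-\eps) \ge \eps^2/2$ on $(0,1)$, which is immediate from $\eps + (1-\eps)\ln(1-\eps) = \sum_{k \ge 2} \frac{\eps^k}{k(k-1)} = \frac{\eps^2}{2} + \frac{\eps^3}{6} + \cdots$, a sum of nonnegative terms. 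Apart from this bookkeeping, the only genuine work is these two elementary scalar estimates; the probabilistic content (Markov, independence, the two-point convexity bound) is entirely routine, and the slightly loose constants $1/3$ and $1/2$ are precisely what the Taylor truncations deliver after the optimal choice of $t$. I do not expect any real obstacle here.
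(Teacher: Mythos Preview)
Your proof is correct and is the standard exponential-moment derivation of the multiplicative Chernoff bounds; the scalar inequalities $(1+\eps)\ln(1+\eps)-\eps \ge \eps^2/3$ and $\eps+(1-\eps)\ln(1-\eps)\ge \eps^2/2$ on $(0,1)$ are handled cleanly via the series $\sum_{k\ge 2}\eps^k/(k(k-1))$. The paper itself states this theorem as a known preliminary without proof, so there is no alternative argument to compare against.
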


\subsection{Acyclic Networks\label{sec:acyclic}}

We start our investigation of the robustness property by restricting
attention to a subclass of boolean networks.  Naturally, because the
model is weaker, we can obtain stronger results here than we can in the
more general setting of unrestricted boolean networks to which we
return later.

\subsubsection{Definition and Characterizing Properties of Acyclic Networks\label{sec:acycdef}}
We are interested in networks that reach fixpoints starting from
an initial configuration and after arbitrary mutations.  To simplify
our task, let us restrict ourselves to networks that reach the same
fixpoint regardless of the initial configuration and such that
fixpoints reached by different mutations  are different.  More
formally, we say that a boolean network $N$ is {\em feed-forward} if:
\begin{itemize}
\item[(i)]
for any mutation $N'$ of $N$ and for any initial configurations
$\alpha_1, \alpha_2 \in \pcube^n$, $\Fix(N',\alpha_1) =   \Fix(N',
\alpha_2) \defeq \Fix(N')$
\item[(ii)]
for any two non-identical mutations $N'$ and $N''$ of $N$, $\Fix(N')
\neq \Fix(N'')$
\end{itemize}
Thus, in feed-forward networks, the initial configuration is
irrelevant in determining the fixpoint, and there is a bijection between
mutations of a network and their fixpoints.  

Our main object of study in this section is a special type of
feed-forward network.  We say a boolean network $N$ is {\em acyclic}
if any network $N'$ with $G_{N'} = G_N$ is feed-forward.  So, in particular, any mutation
of an acyclic network is also acyclic.  The name arises from the
following simple claim:
\begin{theorem}
A boolean network $N$ is acyclic iff $G_N$ is a directed acyclic graph.
\end{theorem}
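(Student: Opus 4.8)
The plan is to prove the two implications separately: for the direction ``$G_N$ a directed acyclic graph $\Rightarrow N$ acyclic'' I will use a topological-ordering argument, and for the converse I will exhibit a concrete network on $G_N$ that fails to be feed-forward, so that $N$ cannot be acyclic.

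\emph{Forward direction.} Fix an arbitrary network $N'$ with $G_{N'} = G_N$; since $G_{N'}$ is acyclic, relabel the variables $x_1,\dots,x_n$ along a topological order, so that every update function $v_j$ of $N'$ depends only on $x_1,\dots,x_{j-1}$. The key step is an induction on $j$ showing that, starting from \emph{any} initial configuration $\alpha$, the coordinate $x_j(t)$ is constant for all $t \ge j+1$ and its eventual value is a function of $v_1,\dots,v_j$ alone, independent of $\alpha$: the base case holds because $v_1$ is a constant function, and in the inductive step one simply substitutes the already-stabilized values of $x_1,\dots,x_{j-1}$ into $v_j$. Evaluating at $t = n+1$ yields a configuration-independent fixpoint, which is condition (i); applying this to every mutation of $N'$ (these all still have graph $G_N$, hence a directed acyclic graph) gives (i) in full. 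For (ii), given distinct mutations $M', M''$ of $N'$, let $i$ be the least index at which their update functions differ, so $v_i^{M''} = -\,v_i^{M'}$ while $v_j^{M'} = v_j^{M''}$ for $j < i$; by the stabilization claim $x_1,\dots,x_{i-1}$ reach the same eventual values $\beta_1,\dots,\beta_{i-1}$ in $M'$ and in $M''$, and then $x_i$ stabilizes to $v_i^{M'}(\beta_1,\dots,\beta_{i-1})$ in $M'$ and to its negation in $M''$, so $\Fix(M')$ and $\Fix(M'')$ differ in coordinate $i$. Hence $N'$ is feed-forward, and since $N'$ was arbitrary, $N$ is acyclic.

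\emph{Converse (contrapositive).} Suppose $G_N$ is not a directed acyclic graph, and fix a directed cycle with vertex set $D \ne \emptyset$. I will show that the ``\textsc{and}-network'' $N^\ast$ on $G_N$ --- where $u^\ast_j(x) = \bigwedge_{i : (x_i,x_j)\in E_N} x_i$ for each node $j$ with an in-edge, and $u^\ast_j \equiv 1$ for source nodes --- is not feed-forward. An \textsc{and} depends on exactly its arguments, so $G_{N^\ast} = G_N$, and the all-$1$ configuration $\mathbf{1}$ is a fixpoint of $N^\ast$. Now start instead from the configuration $\beta^\ast$ that is $-1$ precisely on $D$, and let $F_t$ be the set of coordinates equal to $-1$ at time $t$; since each vertex of the cycle has an in-neighbor on the cycle, $F_1 = D \subseteq F_2$, and since a coordinate equals $-1$ at time $t+1$ exactly when some in-neighbor equalled $-1$ at time $t$, an easy induction gives $F_1 \subseteq F_2 \subseteq \cdots$. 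This increasing chain of subsets of $[n]$ stabilizes, so the trajectory from $\beta^\ast$ reaches a fixpoint whose $-1$-set contains $D$ and is therefore nonempty --- a fixpoint different from $\mathbf{1}$. Thus $N^\ast$ reaches two different fixpoints from two different initial configurations, violating condition (i), so $N^\ast$ is not feed-forward, and $N$ is not acyclic.

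The induction in the forward direction is routine bookkeeping. The step that needs genuine care is the choice of witness in the converse: the first natural candidate --- the parity (\textsc{xor}) network, whose dynamics are $\mathbb{F}_2$-linear --- does \emph{not} always work, since e.g. on $K_2$ with a self-loop at each vertex its transition matrix is nilpotent over $\mathbb{F}_2$ and it converges to $\mathbf{1}$ from every configuration, so it is feed-forward. Using a monotone (\textsc{and}) network instead is precisely what makes the ``$-1$'s spread outward from the cycle and never disappear'' argument go through and pins down a second fixpoint cleanly.
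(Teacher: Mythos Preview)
Your forward direction is exactly the paper's argument, only spelled out with an explicit induction; nothing to add there.

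Your converse, however, takes a genuinely different route from the paper. The paper breaks condition (i) by building a network on $G_N$ with \emph{no} fixpoint at all: it chooses update functions along the cycle so that no assignment can satisfy all of them simultaneously, forcing a cycle attractor. You instead break condition (i) by building a network with \emph{two} fixpoints: the monotone \textsc{and}-network has $\mathbf{1}$ as a fixpoint, but the $-1$'s seeded on the directed cycle $D$ persist (each cycle vertex always has a $-1$ in-neighbor) and spread monotonically, yielding a second fixpoint. Both are valid ways to witness non--feed-forwardness. The paper's version is terser but leaves implicit the work of ensuring $G_{N'}=G_N$ when cycle vertices have additional in-neighbors outside the cycle; your \textsc{and} construction handles this automatically, since an \textsc{and} depends on every one of its arguments, and your monotonicity argument is clean and self-contained. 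Your aside about the \textsc{xor} network on $K_2$ with self-loops being feed-forward (the $\mathbb{F}_2$ transition matrix $\left(\begin{smallmatrix}1&1\\1&1\end{smallmatrix}\right)$ squares to zero) is a nice cautionary example showing why a monotone gate, rather than a linear one, is the right witness here.
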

\begin{proof}
If $G_N$ is a directed acyclic graph, then $N$ is feed-forward because one can update
the nodes of $N$ in sequence determined by a topological order on
$G_N$.  It is clear then that the stable configuration 
reached is a fixpoint, independent of initial configuration.  To see
that two non-identical mutations reach different fixpoints, consider
the first node in the topological order that is mutated in one network
but not in the other, and observe that their states in the fixpoint
configurations of the two networks will be different.\\
If $G_N$ contains a directed cycle, one can choose update functions
for the nodes that make the induced network have a cycle attractor.
This can be done simply by making sure that there is no assignment to
the nodes that satisfies all the update functions on the cycle
simultaneously.  
\end{proof}\\
Next, we show that by viewing the effect of mutations more
algebraically, we can precisely describe the structure of objective
functions which can be robustly expressed by acyclic networks.   In
doing so, we get a better understanding of how the degree of the
network limits the class of functions robustly expressible and how
fixpoints of the mutations of acyclic networks are distributed.  
First, some notation.  Define the $\eps$-biased product measure
$\mu_\eps$ on $\pcube^n$ by $\mu_\eps(x_1,\dots,x_n) = \eps^{n-k} 
(1-\eps)^k$ where $k = |\{i : x_k = 1\}|$.  We may view a function $f : \pcube^n \to \pcube$ as
the characteristic function of a subset of $\pcube^n$, that is, the subset $\{x \in \pcube^n: f(x) =
1\}$.  Then, $\mu_\eps(f)$ denotes the weight assigned by the measure $\mu_\eps$ to the set characterized by $f$.
Our main observation is the following.
\begin{lemma}\label{lem:main}
$f:\pcube^n \to \pcube$ is $\eps$-robustly expressible by an acyclic boolean network of degree $d$ if and only if
there exist $\pi, g, \varphi_1,\dots,\varphi_n$ such that:
\begin{equation}\label{eqn:basis}
f(x_1,\dots,x_n) = g(x_{\pi(1)}\cdot \varphi_1(), x_{\pi(2)}\cdot \varphi_2(x_{\pi(1)}), 
\dots, x_{\pi(n)}\cdot \varphi_n(x_{\pi(1)},\dots,x_{\pi(n-1)}))
\end{equation}
where:
\begin{enumerate}
\item[(i)] $\pi: [n] \to [n]$ is a permutation,
\item[(ii)] for $i \in [n]$, $\varphi_i : \pcube^{i-1} \to \pcube$ is a boolean function depending on
  at most $d$ inputs, and
\item[(iii)] $g: \pcube^n \to \pcube$ such that $\mu_\eps(g) \geq 1 - o(1)$.
\end{enumerate}
\end{lemma}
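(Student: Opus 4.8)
The plan is to show that the only way mutations act on an acyclic network is as a coordinate-wise ``twist'' governed by a topological order, that this twist is a bijection of $\pcube^n$, and that under it the $\eps$-mutation distribution becomes exactly the biased measure $\mu_\eps$ while the objective function $f$ becomes the function $g$ in the statement.

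For the ``only if'' direction, fix an acyclic network $N$ of degree $d$ and let $\pi$ list its nodes in a topological order of $G_N$, so that each update function $u_{\pi(i)}$ is influenced only by $x_{\pi(1)},\dots,x_{\pi(i-1)}$ and by at most $d$ of them; let $\varphi_i:\pcube^{i-1}\to\pcube$ be $u_{\pi(i)}$ regarded as a function of exactly those earlier variables, so $\varphi_i$ depends on at most $d$ inputs, giving~(ii). Encode a mutation $N'$ of $N$ by the vector $s\in\pcube^n$ with $s_i=-1$ exactly when $u_{\pi(i)}$ is negated. By the theorem above ($N$ acyclic iff $G_N$ is a DAG) every such $N'$ is itself acyclic, hence feed-forward, so $N'$ has a unique fixpoint $\xi=\Fix(N')$ independent of the initial configuration; updating the nodes in topological order shows $\xi$ satisfies, for every $i$,
$$\xi_{\pi(i)} \;=\; s_i\cdot\varphi_i\bigl(\xi_{\pi(1)},\dots,\xi_{\pi(i-1)}\bigr).$$
Since $\varphi_i$ is $\pcube$-valued, this recursion inverts: for any $\xi\in\pcube^n$ the unique $s$ producing it is $s_i=\xi_{\pi(i)}\cdot\varphi_i(\xi_{\pi(1)},\dots,\xi_{\pi(i-1)})$, so $s\mapsto\xi$ is a bijection on $\pcube^n$ (re-deriving that each configuration is reached by exactly one mutation). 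An $\eps$-mutation has $\Pr[\text{mutation vector }=s]=\prod_i\bigl((1-\eps)\text{ if }s_i=1,\ \eps\text{ if }s_i=-1\bigr)=\mu_\eps(s)$, so defining $g(s):=f(\xi(s))$ the survival probability is $\sum_s\mu_\eps(s)\,[\,f(\xi(s))=1\,]=\mu_\eps(g)$; hence $\eps$-robust expressibility forces $\mu_\eps(g)\ge1-o(1)$, which is~(iii). Finally, for an arbitrary $x\in\pcube^n$ put $s_i:=x_{\pi(i)}\varphi_i(x_{\pi(1)},\dots,x_{\pi(i-1)})$; a one-line induction on $i$, using that each $\varphi_i$ is $\pcube$-valued, gives $\xi(s)=x$, whence $f(x)=g(s)=g\bigl(x_{\pi(1)}\varphi_1(),\dots,x_{\pi(n)}\varphi_n(x_{\pi(1)},\dots,x_{\pi(n-1)})\bigr)$, which is exactly equation~(\ref{eqn:basis}).

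For the converse, given $\pi,g,\varphi_1,\dots,\varphi_n$ satisfying (i)--(iii) and equation~(\ref{eqn:basis}), build $N$ with update functions $u_{\pi(i)}(x_1,\dots,x_n):=\varphi_i(x_{\pi(1)},\dots,x_{\pi(i-1)})$. Every edge of $G_N$ runs from $\pi(j)$ to $\pi(i)$ with $j<i$ and each node has in-degree at most $d$, so $G_N$ is a DAG of degree $d$ and $N$ is acyclic. Its mutations' fixpoints obey the same recursion as before, so by the same bijection together with equation~(\ref{eqn:basis}) applied at $x=\xi(s)$ we get $f(\xi(s))=g(s)$ and hence survival probability $=\mu_\eps(g)\ge1-o(1)$, i.e.\ $f$ is $\eps$-robustly expressible by $N$.

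The only genuinely delicate point I expect is pinning down the $\pcube$-algebra of the recursion and its inversion precisely --- in particular keeping straight that $\varphi_i$ sees only the \emph{earlier} coordinates of the topological order and that ``degree'' means in-degree. Once the bijection $s\leftrightarrow\xi$ and the identity $\Pr[\text{mutation}=s]=\mu_\eps(s)$ are established, the rest is bookkeeping; notably no probabilistic estimation is needed, since the survival probability equals $\mu_\eps(g)$ exactly, so the $1-o(1)$ threshold passes through verbatim.
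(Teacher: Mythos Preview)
Your proposal is correct and follows essentially the same approach as the paper: extract the permutation $\pi$ from a topological order of $G_N$, let the $\varphi_i$ be the update functions, define $g$ via the relation $g(s)=f(\xi(s))$ where $\xi$ is the fixpoint of the mutation encoded by $s$, and observe that the $\eps$-mutation distribution on $s$ is exactly $\mu_\eps$. If anything, you are more explicit than the paper about the bijection $s\leftrightarrow\xi$ and the exact equality between the survival probability and $\mu_\eps(g)$, which the paper leaves somewhat implicit.
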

\begin{proof}
To prove one direction, suppose $f$ is $\eps$-robustly expressed by an
acyclic network $N$ of degree $d$.  Since $G_N$ can be topologically
ordered, there exists a permutation $\pi : [n] \to [n]$ such that
there is an edge between node $i$ and node $j$ in $G_N$ only if
$\pi^{-1}(x) \leq \pi^{-1}(y)$.  For every $i \in [n]$, let
$\varphi_i$ denote the update function associated with node $\pi(i)$
in the network.  Note that for any $i$, the function $\varphi_i$ can
only take as arguments at most $d$ elements of the set
$\{x_{\pi(j)}\}_{j \leq i}$.   Let $g(s_1,\dots,s_n) \defeq
f(x_1,\dots,x_n)$ where inductively, $x_{\pi(i)} = s_{\pi(i)} 
\varphi_i(x_{\pi(1)},\dots,x_{\pi(i-1)})$ for each $i \in [n]$.  One can explicitly verify now that
Equation (\ref{eqn:basis}) holds for this choice of $g$: If $s_{\pi(i)} = x_{\pi(i)} \cdot
\varphi_i(x_{\pi(1)}, x_{\pi(2)},\dots,x_{\pi(i-1)})$, then by our choice of $g$, $g(s_1,\dots,s_n) =
f(x_1,\dots,x_n)$.  Now, in an $\eps$-mutation of $N$, each $x_{\pi(i)} \cdot
\varphi_i(x_{\pi(1)}, x_{\pi(2)},\dots,x_{\pi(i-1)})$ is {\em
  independently} $1$ with probability $1-\eps$ and $-1$ with
probability $\eps$.  By definition of $\eps$-robust expressibility,
$g: \pcube^n \to \pcube$ is such that $\mu_\eps(g) \geq 1 - o(1)$.   

The proof in the other direction is similar.  Given the permutation $\pi$ and the functions
$\varphi_1,\dots, \varphi_n$, simply define a boolean network $N$ where $\pi$ gives the ordering of
the nodes and the $\varphi_i$'s specify the update functions of the nodes.  Then, the condition on $g$
ensures that $f$ is robustly expressed by the network.
\end{proof}

\subsubsection{Optimal Networks from Decision Trees \label{sec:dectrees}}

Furthermore, as we show next, we can construct the optimally robust
acyclic network for a given objective function in time quasipolynomial
in the truth-table size of the function.  For this, let us recall the
notion of a {\em decision tree} for a boolean function $f : \pcube^n
\to \pcube$.  It is a rooted binary tree $T_f$ where each edge is
labeled $-1$ or $1$, each non-leaf vertex is labeled with a
variable, and each leaf vertex is labeled with a $-1$ or $1$.  The
decision tree $T_f$ computes $f$ in the natural way: any assignment to
the variables determines a unique path from the root to a leaf, and
the label of the leaf at the end of the path is the value of the
function applied to the assignment.  

Now, with any decision tree $T_f$ for an objective function $f$, we
associate a boolean network $N_{T_f}$.  Given a mutation
parameter $\eps \in (0,\frac{1}{2})$, the decision tree is first preprocessed as
follows.  At each non-leaf node $v$ of the decision tree, we associate a real number $s(v)
\in [0,1]$ and a bit $b(v) \in \pcube$.  For a leaf node $v$, $s(v)$
is defined to be equal to $1$ if the leaf label is $1$ and $0$
otherwise.  For a non-leaf node $v$ with its two child nodes $v_1$ and
$v_2$, $b(v)$ is equal to the label of the edge leading to the child
$\arg \max_{w \in \{v_1,v_2\}} s(w)$, and $s(v)$ is equal to $(1-\eps)
\max(s(v_1), s(v_2)) + \eps \min(s(v_1), s(v_2))$.  That is, we define
$s(v)$ and $b(v)$ iteratively from the leaf nodes up to the root.
Now, the boolean network $N_{T_f} = (x,u_1,\dots,u_n)$ is defined by
setting each update function $u_i$ to output the value $b(v)$, where
$v$ is the first node in $T_f$ labeled $x_i$ obtained by following
edges of the decision tree down from the root.  (If $x_i$ is not
reached, then $u_i$ can be arbitrary.)  The degree of the network
constructed thus is at most $n$.  It is clear that if $T_f$ is a {\em
  layered decision tree}, i.e. nodes at the same distance from the
root have the same label, then the network $N_{T_f}$ is acyclic.   Figure \ref{fig:dectree}
illustrates the construction of the network $N_{T_f}$ from the decision
tree $T_f$ for the objective function $f(x_1,x_2,x_3) = (x_1 \wedge x_2) \vee (\overline{x_2} \wedge x_3)$.  

\begin{figure}
\begin{center}
\includegraphics[width=15cm]{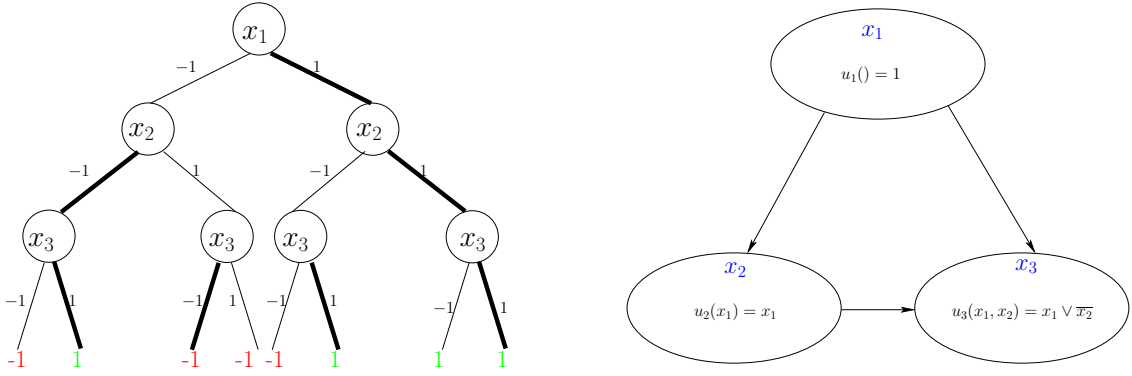}
\end{center}
\caption{For the objective function $f(x_1,x_2,x_3) = (x_1 \wedge x_2) \vee (\overline{x_2} \wedge x_3)$,
the decision tree is shown on the left.  Each node $v$ in the tree is associated with a value $b(v) \in \pcube$, as
described in the text, and the adjacent outgoing edge from $v$ labeled $b(v)$ is in bold.  The corresponding boolean
 network is shown on the right.}
\label{fig:dectree}
\vspace{.5cm}
\end{figure}

\begin{theorem}\label{thm:opt}
For a given objective function $f : \pcube^n \to \pcube$ and $\eps \in
(0,\frac{1}{2})$, the acyclic boolean network that is optimally
$\eps$-robust with respect to $f$ is a network $N_{T_f}$ for some
layered decision tree $T_f$ for $f$.
\end{theorem}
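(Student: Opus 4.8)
The plan is to combine the algebraic picture from Lemma~\ref{lem:main} with a backward dynamic program over the ``preferred values'' of the update functions. Recall from the proof of Lemma~\ref{lem:main} that, up to the values of update functions on inputs that are never consulted, an acyclic network $N$ is specified by a permutation $\pi$ and functions $\varphi_1,\dots,\varphi_n$, where $\varphi_i$ is the update function of node $\pi(i)$ and depends only on $x_{\pi(1)},\dots,x_{\pi(i-1)}$; moreover the survival probability of $N$ with respect to $f$, which for an acyclic network does not depend on the initial configuration, equals $\mu_\eps(g)$ where $g(s)\defeq f(T_{\pi,\varphi}(s))$ and $T_{\pi,\varphi}$ is the bijection of $\pcube^n$ given by $x_{\pi(i)} = s_{\pi(i)}\cdot\varphi_i(x_{\pi(1)},\dots,x_{\pi(i-1)})$. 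Phrased dynamically: an $\eps$-mutation fixes the variables in the \emph{fixed} order $\pi(1),\pi(2),\dots,\pi(n)$, and at step $i$ it sets $x_{\pi(i)}$ to its preferred value $\varphi_i(x_{\pi(1)},\dots,x_{\pi(i-1)})$ with probability $1-\eps$ and to the negation of that value with probability $\eps$, independently across steps; then $f$ is evaluated on the resulting configuration. So the optimally $\eps$-robust acyclic network is obtained by choosing $\pi$ and the $\varphi_i$ to maximize the probability that this process lands in $f^{-1}(1)$. The essential feature to exploit is that the \emph{order} of revelation is oblivious --- it cannot depend on the values observed so far --- which is exactly the ``layered'' restriction on decision trees.

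First I would fix the order, say $\pi=\sigma$, and optimize the preferred values by backward induction. For a partial assignment $\vec a\in\pcube^{i-1}$ to $(x_{\sigma(1)},\dots,x_{\sigma(i-1)})$, let $V_i(\vec a)$ denote the maximum, over all choices of $\varphi_i,\dots,\varphi_n$, of the conditional probability that the process above ends in $f^{-1}(1)$ given that its first $i-1$ steps produced $\vec a$. Then $V_{n+1}(x)$ is $1$ if $f(x)=1$ and $0$ otherwise, and because the noise at step $i$ is independent of everything else and the policies governing the two possible continuations $(\vec a,1)$ and $(\vec a,-1)$ involve disjoint sets of update-function values,
\begin{equation*}
V_i(\vec a) = \max_{c\in\pcube}\Big[(1-\eps)\,V_{i+1}(\vec a,c) + \eps\,V_{i+1}(\vec a,-c)\Big] = (1-\eps)\max\big(V_{i+1}(\vec a,1),V_{i+1}(\vec a,-1)\big) + \eps\min\big(V_{i+1}(\vec a,1),V_{i+1}(\vec a,-1)\big),
\end{equation*}
with the maximizing $c$ equal to $+1$ or $-1$ according to whether $V_{i+1}(\vec a,1)\ge V_{i+1}(\vec a,-1)$. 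This is exactly the recursion that the preprocessing of Section~\ref{sec:dectrees} applies to compute $s(\cdot)$ and $b(\cdot)$ on the layered decision tree $T^\sigma$ for $f$ all of whose depth-$(i-1)$ internal nodes are labeled $x_{\sigma(i)}$ and whose depth-$n$ leaves carry the value of $f$ on the corresponding assignment: writing $v_{\vec a}$ for the node of $T^\sigma$ reached by $\vec a$, one checks by induction that $s(v_{\vec a})=V_i(\vec a)$ and that the optimal $\varphi_i(\vec a)$ is $b(v_{\vec a})$. Hence the acyclic network built from the optimal preferred values is precisely $N_{T^\sigma}$, and its survival probability equals $s$ at the root of $T^\sigma$.

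It remains to optimize over the order: there are only $n!$ permutations $\sigma$, so the maximum of $s$ at the root of $T^\sigma$ over all $\sigma$ is attained, say at $\sigma^\star$. By the previous paragraph $N_{T^{\sigma^\star}}$ is an acyclic network (the excerpt already observes that $N_{T_f}$ is acyclic when $T_f$ is layered) whose survival probability with respect to $f$ equals the largest survival probability of any acyclic network, and $T^{\sigma^\star}$ is a layered decision tree for $f$; this is the claimed form. One could instead take a pruned version of $T^{\sigma^\star}$ in which no subtree is constant: since $s$ and $b$ send a constant subtree to a leaf of the same value, pruning changes neither the induced network on the inputs it ever sees nor its survival probability, although a pruned tree is only layered in the looser sense that all \emph{internal} nodes at a given depth share a label.

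The step that needs the most care is the middle one: showing that the layer-by-layer backward optimization of the preferred values is globally optimal, even though the objective $\mu_\eps(f\circ T_{\pi,\varphi})$ couples $\varphi_1,\dots,\varphi_n$ all at once. This is the standard principle of optimality for a finite-horizon controlled process --- the value $V_i$ depends on the past only through $\vec a$ because the per-step noise is independent, and the continuations below the two children of a node draw on disjoint blocks of update-function values, so their policies may be optimized separately --- but it has to be aligned carefully with the definition in Lemma~\ref{lem:main}, and one must verify that the resulting recursion is literally the $s(\cdot),b(\cdot)$ computation of Section~\ref{sec:dectrees} and that $T^\sigma$ is genuinely layered. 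Optimizing over the order is then a trivial finite search, and no bound on the locality of the $\varphi_i$ (equivalently, on the decision tree) is needed since we place no restriction on the degree.
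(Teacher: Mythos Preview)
Your proof is correct and follows essentially the same approach as the paper: fix a variable order, argue via the principle of optimality that the optimal update functions for that order are exactly those given by the $s(\cdot),b(\cdot)$ recursion on the layered decision tree, and then maximize over orders. Your Bellman-style backward induction is a more careful and explicit rendering of the same induction the paper sketches in forward form (``the remaining network must be optimally robust for the induced function, so the first node picks the better branch''), and your observation that the two subtrees may be optimized independently because they use disjoint restrictions of the $\varphi_j$'s is precisely the point the paper's terse argument leaves implicit.
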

\begin{proof}
Fix a total order among the variables $x_1,\dots,x_n$.  We show that
if $T_f$ is a layered decision tree for $f$ reading the variables in
the selected order, then $N_{T_f}$ is the optimally $\eps$-robust
network with respect to $f$ among those acyclic networks $N$ for which
the DAG $G_n$ is consistent with the selected ordering.  To see this,
use induction on $n$.  If $n=1$, simply outputting the bit which
satisfies $f$ is optimally robust.  For $n>1$, consider the first node
in the total order.  In the optimally $\eps$-robust network, it must
be that after the first node has set its state, the remaining
network on $n-1$ nodes must also be optimally $\eps$-robust with respect to the
function on $n-1$ bits induced after setting the first node.  So, the
first node must set its state to the bit such that the network on
$n-1$ bits with the higher survival probability is chosen.  But this
is exactly how the network $N_{T_f}$ is constructed.
\end{proof}

Therefore, the time needed to construct the optimally robust acyclic network
with respect to a given objective function $f$ is at most $n! \cdot
O(2^n) \leq 2^{O(n \log n)}$, which is quasipolynomial in $2^n$, the
description length of an arbitrary boolean function.  For an objective function that is
guaranteed to have small description length, one could hope for a much
faster algorithm.  Our only result in this direction is the following
for {\em symmetric functions}; a function $f : \pcube^n \to \pcube$ is
said to be symmetric if for any permutation $\pi$ on $[n]$,
$f(x_1,\dots,x_n) = f(x_{\pi(1)},\dots,x_{\pi(n)})$.  
\begin{theorem}
The optimally robust acyclic network for a symmetric function can be
constructed in $O(n^2)$ time.
\end{theorem}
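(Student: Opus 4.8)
The plan is to combine Theorem~\ref{thm:opt} with the observation that the symmetry of $f$ collapses the (exponentially large) layered decision tree into a quadratic-size ``layered decision diagram''. By Theorem~\ref{thm:opt}, an optimally $\eps$-robust acyclic network for $f$ has the form $N_{T_f}$ for some layered decision tree $T_f$; and since $f$ is symmetric, the optimal value achieved subject to any fixed variable order is the same for all orders, because relabeling the variables by a permutation leaves $f$ unchanged and hence leaves unchanged the root value $s(\cdot)$ produced by the preprocessing of the corresponding layered tree. So we may read the variables in the fixed order $x_1,x_2,\dots,x_n$.

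The key structural fact to establish is that, in the layered decision tree $T_f$ reading the variables in this order, a node $v$ at depth $i$ reached by a root-to-$v$ path with exactly $k$ edges labeled $1$ roots a subtree that is itself the layered decision tree of the restriction of $f$ obtained by fixing $k$ of the first $i$ coordinates to $1$ --- and, by symmetry of $f$, this restriction depends only on the pair $(i,k)$, not on which coordinates were fixed. Consequently the quantities $s(v)$ and $b(v)$ from the preprocessing step depend only on $(i,k)$; write them $s(i,k)$ and $b(i,k)$, for $0\le k\le i\le n$. Unwinding the definition of $N_{T_f}$, these satisfy, for $i<n$,
\[
s(i,k) \;=\; (1-\eps)\max\big(s(i+1,k),\, s(i+1,k+1)\big) \;+\; \eps\min\big(s(i+1,k),\, s(i+1,k+1)\big),
\]
with $b(i,k)=1$ when $s(i+1,k+1)\ge s(i+1,k)$ and $b(i,k)=-1$ otherwise, and base case $s(n,k)$ equal to $1$ if $f$ outputs $1$ on inputs of Hamming weight $k$ and $0$ otherwise (a symmetric function being specified precisely by these $n+1$ bits).

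One then fills in this triangular table from $i=n$ downward; it has $\sum_{i=0}^{n}(i+1)=\binom{n+2}{2}=O(n^2)$ entries, each computed from two earlier ones in $O(1)$ time, for $O(n^2)$ time overall. The optimal survival probability is $s(0,0)$, which by the design of the preprocessing equals the survival probability of $N_{T_f}$, and the optimal network itself is read off directly: its update function for the $(i+1)$-st variable sends $(x_1,\dots,x_i)$ to $b(i,k)$, where $k$ is the number of ones among $x_1,\dots,x_i$; this is a symmetric function of $i$ variables, described by the $i+1$ bits $b(i,0),\dots,b(i,i)$, so the network has description size --- and is output in time --- $O(n^2)$.

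The only step requiring genuine care is the node-merging claim: one must verify that $s(v)$ and $b(v)$ depend only on the subtree rooted at $v$ (immediate from their bottom-up definition) and that symmetry together with the layered structure forces that subtree to be determined by the pair $(i,k)$; this is a short induction from the leaves up. Everything else --- the reduction to a single variable order and the $O(n^2)$ running-time accounting --- is routine bookkeeping, and I expect no further obstacle.
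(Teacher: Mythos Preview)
Your proposal is correct and is precisely the approach the paper has in mind: the paper's proof is a one-line sketch (``specialize the decision tree algorithm for symmetric functions so as to reduce the number of queries''), and you have carried out that specialization in detail, collapsing the layered tree to the $O(n^2)$ table indexed by $(i,k)$ via the symmetry of $f$, with the permutation-invariance argument justifying the fixed variable order. There is nothing to add.
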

The key idea for the proof is to specialize the decision tree
algorithm for symmetric functions so as to reduce the number of queries.  

Notice that our construction of the network $N_{T_f}$ from the
decision tree $T_f$ did not depend on the fact that $T_f$ was
layered.  We term the networks arising from decision trees as {\em
  pseudo-acyclic} boolean networks.  One can easily check that any
pseudo-acyclic boolean network is feed-forward.  In fact, we
conjecture that any feed-forward network is also pseudo-acyclic, and so, in
some sense, pseudo-acyclic networks lie on the border between
acyclic networks and general boolean networks for which all mutations
reach fixed points (but perhaps starting only from certain initial configurations).  The optimally robust
pseudo-acyclic boolean network with respect to a 
given objective function can be found by enumerating over all decision
trees $T_f$ of a function and maximizing over the survival
probabilities of $N_{T_f}$.  The correctness argument is similar to that of
Theorem \ref{thm:opt}.  Most of our results in the following
subsections can be translated to the pseudo-acyclic setting.

\subsubsection{Robustly Expressible Polynomial Threshold Functions\label{sec:neces}}

In this section, we show connections between robust expressibility by
acyclic networks and {\em polynomial threshold functions}.  A function
$f : \pcube^n \to \pcube$ is said to be a polynomial threshold
function of degree $d$ iff $f$ can be written as $\sgn(p(x_1,\dots,x_n))$ where $p :
\pcube^n \to \mathbb{R}$ is a polynomial\footnote{Because the inputs are
$\pcube$, we can assume the polynomial to be multilinear without loss
of generality.} with real-valued coefficients of degree at most $d$
and where $\sgn: \mathbb{R} \to \pcube$ 
is the sign function which takes any negative input to $-1$ and any
non-negative input to $+1$.  Polynomial threshold functions are
well-studied objects in theoretical computer science, arising for
instance in learning theory and circuit complexity.  In particular,
low-degree polynomial threshold functions have been shown to be easy
to compute in several natural computational models.  

The characterization in Lemma \ref{lem:main} immediately leads to the
following implication for function families robustly expressible by acyclic
networks.  

\begin{figure}
\begin{center}
\includegraphics[width=8cm]{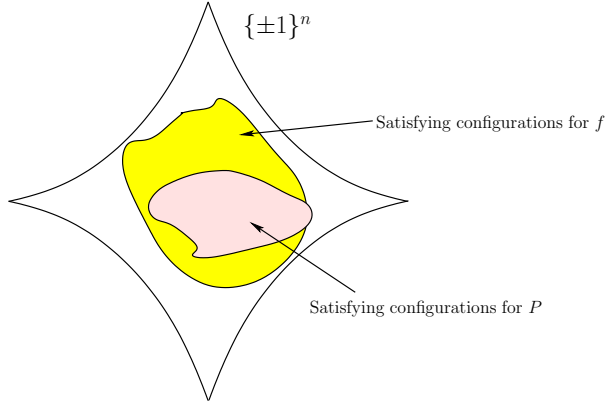}
\end{center}
\caption{Mutations of a constant-degree Boolean network robustly satisfying $f$ express
 configurations that, with high probability, lie in the solution set
 of $P$.  The solution set of $P$ is an efficiently learnable set,
 almost completely contained in the solution set of $f$.}
\label{fig:ptf}
\vspace{.5cm}
\end{figure}

\begin{theorem}\label{thm:impl}
If $f: \pcube^n \to \pcube$ is robustly expressible by an acyclic boolean network $N$ of constant
degree $d$, there is a function $P: \pcube^n \to \pcube$ such that:
\begin{enumerate}
\item[(i)]
$P$ is computable by a polynomial threshold function of degree $2d+2$,

\item[(ii)]
at least $1-o(1)$ fraction of the solution set of $P$ satisfies $f$, and

\item[(iii)]
the probability that an $\eps$-mutation of $N$ expresses a
configuration satisfying $P$ is at least $1-o(1)$.
\end{enumerate}
\end{theorem}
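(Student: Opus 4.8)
The plan is to unpack the hypothesis via Lemma~\ref{lem:main} and then let $P$ threshold the number of ``errors'' recorded in the expressed configuration. Lemma~\ref{lem:main} gives a permutation $\pi$, boolean functions $\varphi_1,\dots,\varphi_n$ with each $\varphi_i$ depending on at most $d$ inputs, and a $g$ with $\mu_\eps(g)\ge 1-o(1)$, such that $f(x)=g(s_1(x),\dots,s_n(x))$ for $s_i(x):=x_{\pi(i)}\cdot\varphi_i(x_{\pi(1)},\dots,x_{\pi(i-1)})$. As in that lemma's proof, the map $\Psi\colon x\mapsto(s_1(x),\dots,s_n(x))$ is a bijection of $\pcube^n$ (invert greedily in $\pi$-order), it preserves the uniform measure, $f=g\circ\Psi$, and for an $\eps$-mutation $N'$ of $N$ the (unique, since $N'$ is again acyclic) fixpoint $\Fix(N')$ satisfies $\Psi(\Fix(N'))\sim\mu_\eps$. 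I would take, for a radius $r$ to be chosen,
\[
P(x)\ :=\ \sgn\!\Big(r+\tfrac12-\textstyle\sum_{i=1}^n\big(\tfrac{1-s_i(x)}{2}\big)^2\Big),
\]
so $P(x)=1$ iff $\Psi(x)$ has at most $r$ coordinates equal to $-1$. Each $s_i(x)$ is a polynomial in $x$ of degree $\le d+1$, so the argument of $\sgn$ has degree $\le 2(d+1)=2d+2$; this gives (i).

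For (iii): the number of $-1$'s in $\Psi(\Fix(N'))$ equals the number $Z\sim\mathrm{Bin}(n,\eps)$ of update functions inverted by the mutation, so the mutation expresses a configuration satisfying $P$ with probability $\Pr[Z\le r]$. Taking $r=\lceil(1+\gamma_n)\eps n\rceil$ with $\gamma_n\to 0$ and $\gamma_n^2 n\to\infty$, the Chernoff bound yields $\Pr[Z>r]\le\exp(-\gamma_n^2\eps n/3)=o(1)$, so (iii) holds for any such $\gamma_n$.

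The work is in choosing such a $\gamma_n$ that also makes (ii) hold. Pushing through $\Psi$: the solution set of $P$ is $\Psi^{-1}(B_r)$ with $B_r$ the Hamming ball of radius $r$ about the all-$1$'s point of $\pcube^n$, and since $\Psi$ preserves the uniform measure and $f=g\circ\Psi$, the fraction of the solution set of $P$ satisfying $f$ is exactly the fraction of $B_r$ on which $g=1$. With $\bar g:=\{u:g(u)=-1\}$ and $\eta:=\mu_\eps(\bar g)=1-\mu_\eps(g)=o(1)$, I need $|B_r\cap\bar g|\le o(1)\cdot|B_r|$. The point is that, since $\eps<\tfrac12$, the $\mu_\eps$-weight $\eps^k(1-\eps)^{n-k}$ of a point with $k$ coordinates equal to $-1$ is decreasing in $k$, so every point of $B_r$ has $\mu_\eps$-weight $\ge\eps^r(1-\eps)^{n-r}$; hence
\[
|B_r\cap\bar g|\ \le\ \frac{\mu_\eps(\bar g\cap B_r)}{\eps^r(1-\eps)^{n-r}}\ \le\ \frac{\eta}{\eps^r(1-\eps)^{n-r}},\qquad |B_r|\ \ge\ \binom{n}{r},
\]
so $|B_r\cap\bar g|/|B_r|\le\eta/\Pr[\mathrm{Bin}(n,\eps)=r]$. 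Since $\Pr[\mathrm{Bin}(n,\eps)=(1+\gamma)\eps n]\ge e^{-C_\eps\gamma^2 n}/\mathrm{poly}(n)$ by Stirling and $\ln(1/\eta)\to\infty$, one can pick $\gamma_n\to 0$ with $\gamma_n^2 n\to\infty$ yet $\gamma_n^2 n=o(\ln(1/\eta))$, making this ratio $o(1)$ while keeping (iii).

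The main obstacle is exactly this balancing: $r$ must be far enough above $\eps n$ that a $1-o(1)$ fraction of $\mathrm{Bin}(n,\eps)$ lies in $B_r$ (needed for (iii)), yet close enough that the worst-case blow-up factor $1/\Pr[\mathrm{Bin}=r]$ between $\mu_\eps$-mass and uniform count on $B_r$ does not overwhelm the $o(1)$ mass of $\bar g$ (needed for (ii)); this forces $r=(1+o(1))\eps n$ with the $o(1)$ margin tuned to the rate at which $\mu_\eps(g)\to 1$. Monotonicity of $\eps^k(1-\eps)^{n-k}$ is what lets a single one-sided threshold accomplish both; the extra degree (up to $2d+2$ rather than $d+1$) leaves room, if one prefers, to use instead a two-sided band $\{\ell\le w\le r\}$ in the $s_i(x)$, or its complement, to excise bad mass sitting near the $\mu_\eps$-typical weight $\eps n$.
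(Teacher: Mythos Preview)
Your approach is essentially the paper's: both invoke Lemma~\ref{lem:main} to pass to the ``error'' coordinates $s_i$ and then take $P$ to be a threshold on the error count. The paper uses the two-sided band $\sum_i s_i\in[(1-3\eps)n,(1-\eps)n]$, realized as the sign of a product of two degree-$(d{+}1)$ forms (exactly the alternative you mention in your last sentence); you use a one-sided Hamming ball $B_r$. Incidentally, your squaring is superfluous on $\pcube^n$ since $((1-s)/2)^2=(1-s)/2$ there, so your $P$ is already a PTF of degree $d{+}1$, which is stronger than the stated $2d{+}2$.

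Where you go beyond the paper is in part~(ii): the paper simply asserts that $\mu_\eps(g)\ge 1-o(1)$ forces a $1-o(1)$ uniform fraction of its band to satisfy $g$, without further argument. You correctly isolate (ii) as the crux and give a quantitative bound, but that bound does not close the argument. Since $\Pr[\mathrm{Bin}(n,\eps)=r]=O(1/\sqrt n)$ for \emph{every} $r$, your inequality $|B_r\cap\bar g|/|B_r|\le \eta/\Pr[\mathrm{Bin}(n,\eps)=r]$ is at best $\Omega(\eta\sqrt n)$; the hypothesis only guarantees $\eta=1-\mu_\eps(g)=o(1)$, so e.g.\ $\eta=1/\log n$ is allowed, and then $\eta\sqrt n\to\infty$. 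Your proposed tuning $\gamma_n^2 n=o(\ln(1/\eta))$ controls the factor $e^{C_\eps\gamma_n^2 n}$ but does nothing about the $\mathrm{poly}(n)$ prefactor. The underlying obstruction is real for both constructions: the counting measure on $B_r$ (and even more so on the paper's fixed-width band) is dominated by its outermost layer, where $\mu_\eps$ is tiny, so an adversarial $\bar g$ sitting on that layer can have $\mu_\eps(\bar g)=o(1)$ while still occupying a constant uniform fraction of the region. Any repair has to choose $r$ using more information about $\bar g$ than just the scalar $\eta$.
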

\begin{proof}
Suppose $f$ is $\eps$-robustly expressed by an acyclic network $N$ of
degree $d$.  We use Lemma \ref{lem:main} to write $f$ as
$g(x_{\pi(1)}\cdot \varphi_1(), x_{\pi(2)}\cdot \varphi_2(x_{\pi(1)}),
\dots, x_{\pi(n)}\cdot
\varphi_n(x_{\pi(1)},x_{\pi(2)},\dots,x_{\pi(n-1)}))$ where each
$\varphi_i$ is of constant arity $d$ and $g$ is such that $\mu_\eps(g)
\geq 1-o(1)$.  For $i \in [n]$, set
$s_i = x_{\pi(i)} \cdot \varphi_i(x_{\pi(1)},\dots,x_{\pi(i-1)})$.
As we showed in the proof of Lemma \ref{lem:main}, there is a
bijective correspondence between $(x_1,\dots,x_n)$ and
$(s_1,\dots,s_n)$.  

Let $R = \left\{(x_1,\dots,x_n) \in \pcube^n: \sum_{i=1}^n s_i \in [(1-3\eps)n,
  (1-\eps)n]\right\}$.  We will construct $P$ so that its solution set
is $R$. In a configuration expressed by $\eps$-mutation of $N$, each
$s_i$ is independently $+1$ with probability $1-\eps$ and $-1$ with probability $\eps$.
Therefore, by standard Chernoff bounds, a configuration $(x_1,\dots,x_n)$ expressed by an 
$\eps$-mutation of $N$ satisfies the property that $|\sum_i s_i - (1-2\eps)n|
< \eps n$ with probability at least $1-o(1)$, proving part (iii).  Moreover, because $\mu_\eps(g)
\geq 1-o(1)$, it follows that $\Pr_s[g(s_1,\dots,s_n) = 1] \geq
1-o(1)$ where $s = (s_1,\dots,s_n)$ is drawn uniformly from the set 
$\left \{(s_1,\dots,s_n) \in \pcube^n : (1-3\eps)n \leq \right.$
$\left. \sum_i s_i \leq (1-\eps)n]\right\}$, proving part (ii).  Finally, for (i), observe
that $R$ is the solution set for $$P(x_1,\dots,x_n) = {\rm sgn}\left[\left(\sum_i
  s_i - (1-3\eps)n\right)\left((1-\eps)n - \sum_i s_i\right)\right]$$
Expressing each $\varphi_i$ as a multilinear polynomial of degree $d$,
each $s_i$ becomes a multilinear polynomial of degree $d+1$.  Therefore,
$P$ is the sign of a $(2d+2)$-degree polynomial, proving our theorem. 
\end{proof}

As stated below in Corollary \ref{cor:deg}, Theorem \ref{thm:impl}
implies that a degree bound on an acyclic boolean network means that
it can only robustly express those functions which contain a
``simple'' subset.  In other words, no matter how complicated the
objective function $f$ is, a low-degree acyclic network robustly
expressing $f$ maintains assignments that solve a ``simpler''
subfunction of $f$.  The biological implication is that if there is a
population of organisms trying to satisfy the same environmental
constraint, then the phenotypes displayed in the population can be
efficiently described.  This conclusion goes against the commonly held
belief that phenotypes occurring in biology are varied in a very
complicated fashion.  

To state the corollary precisely, we use the notion of {\em PAC
  (probably approximately correct)   learnability}, the most commonly
used theoretical framework in machine learning.  Essentially, a
function is said to be PAC-learnable if there is an efficient
algorithm that can use random example evaluations of the function with
respect to some probability distribution to learn the function with
high probability on a large fraction of the domain; see
\cite{kearns1994icl}, for instance, for more details.   

\begin{corollary}\label{cor:deg}
For a function $f : \pcube^n \to \pcube$ and $\eps \in (0,\frac{1}{2})$, if the
solution set of $f$ does not contain a set $S$ of size at
least\footnote{$H(\cdot)$ is the binary entropy function: for $p \in
  (0,1)$, $H(p) = -p \log_2 p - (1-p) \log_2 (1-p)$.}
$2^{H(\eps)n}$, that is PAC-learnable in polynomial time
with respect to the uniform distribution, then there is no acyclic
constant-degree boolean network that $\eps$-robustly expresses $f$.
\end{corollary}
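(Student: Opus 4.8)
\noindent The plan is to prove the contrapositive, reading it off almost directly from Theorem~\ref{thm:impl}. Suppose some acyclic boolean network $N$ of constant degree $d$ is $\eps$-robust with respect to $f$. Theorem~\ref{thm:impl} then supplies a function $P:\pcube^n\to\pcube$ that is computed by a polynomial threshold function of the constant degree $2d+2$ and whose solution set $\{x:P(x)=1\}$ has at least a $1-o(1)$ fraction lying inside $\{x:f(x)=1\}$. I would take the witness set to be $S:=\{x\in\pcube^n: P(x)=1 \text{ and } f(x)=1\}$, the intersection of the two solution sets; then $S\subseteq\{x:f(x)=1\}$ by construction, and it remains to check the two quantitative requirements of the corollary: that $|S|\ge 2^{H(\eps)n}$, and that $S$ is PAC-learnable in polynomial time under the uniform distribution.

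For the size, I would use the explicit description of $\{x:P(x)=1\}$ from the proof of Theorem~\ref{thm:impl}: it equals $R=\{x:\sum_i s_i\in[(1-3\eps)n,(1-\eps)n]\}$, and since $x\mapsto(s_1,\dots,s_n)$ is a bijection of $\pcube^n$, in the $s$-coordinates $R$ is exactly the set of $s\in\pcube^n$ whose number of $-1$ entries lies in $[\tfrac{\eps n}{2},\tfrac{3\eps n}{2}]$. Hence $|R|=\sum_m\binom{n}{m}$ with $m$ ranging over that interval, and standard binomial-coefficient estimates show $|R|$ is $2^{H(\eps)n+\Omega(n)}$: the summation range includes a coefficient $\binom{n}{m}$ with $m/n$ a constant strictly between $\eps$ and $\tfrac12$ when $\eps<\tfrac13$, and the central coefficient $\binom{n}{\lfloor n/2\rfloor}$ when $\eps\ge\tfrac13$, and $H$ is strictly increasing and bounded by $1$ on $(0,\tfrac12)$. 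Combining with the $1-o(1)$ containment gives $|S|\ge(1-o(1))|R|\ge 2^{H(\eps)n}$ for all large $n$.

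For learnability, the point is that $S$ is only an $o(1)$-perturbation of the PTF solution set $\{x:P(x)=1\}$: under the uniform distribution on $\pcube^n$ the indicator of $S$ differs from that of $\{x:P(x)=1\}$ with probability $2^{-n}\,\bigl|\{x:P(x)=1\}\setminus S\bigr|=o(1)$, by Theorem~\ref{thm:impl}(ii). A polynomial threshold function of constant degree $2d+2$ turns into an ordinary halfspace once each input is lifted to the vector of its $\sum_{k=0}^{2d+2}\binom{n}{k}=O(n^{2d+2})=\poly(n)$ monomials, and halfspaces in $\poly(n)$ dimensions are PAC-learnable under any distribution in time polynomial in $n$ and in the accuracy and confidence parameters (find a consistent halfspace by linear programming; its VC dimension is $\poly(n)$). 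Hence $\{x:P(x)=1\}$, and therefore $S$ up to an additive $o(1)$ in the error, is efficiently PAC-learnable under the uniform distribution, which is the conclusion the corollary states.

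The one step that is not mechanical, and which I expect to be the real obstacle, is this last one: $S$ is itself not a polynomial threshold function but only $o(1)$-close to one, and agnostic learning of halfspaces is intractable in general, so strictly speaking the procedure produces a hypothesis that is $o(1)$-far from the target accuracy for $S$ rather than meeting it exactly. This matches the explicitly heuristic flavour of the corollary; if one wants the literal statement it is cleanest to instead take $S=\{x:P(x)=1\}$ (exactly a constant-degree PTF, hence exactly and efficiently learnable, and of size $\ge 2^{H(\eps)n}$), which is then ``almost contained in'' rather than ``contained in'' $\{x:f(x)=1\}$, exactly as depicted in Figure~\ref{fig:ptf}.
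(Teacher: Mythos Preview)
Your proposal is correct and follows essentially the same route as the paper: take $S=\{x:P(x)=1\}\cap\{x:f(x)=1\}$, lower-bound $|S|$ via the explicit description of $R$ from the proof of Theorem~\ref{thm:impl}, and argue learnability from the fact that $S$ is $(1-o(1))$-approximated under the uniform distribution by a constant-degree polynomial threshold function. The only cosmetic differences are that the paper cites \cite{KlivansServedio04,KlivansOdonnellServedio04} for PTF learnability rather than spelling out the lift-to-monomials/halfspace/LP argument, and that your size analysis is slightly more careful in separating the cases $\eps<\tfrac13$ and $\eps\ge\tfrac13$ (the paper simply writes $2^{(H(1.5\eps)-o(1))n}$). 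The subtlety you flag at the end---that $S$ is only $o(1)$-close to a PTF rather than exactly one, so ``PAC-learnable'' here means learnable up to an additive $o(1)$ error---is present in the paper's proof as well and is treated with the same informality there.
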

\begin{proof}
Suppose otherwise.  Let $R$ be the solution set of the function $P$
guaranteed by Theorem \ref{thm:impl}, and let $S = R \cap \{x=(x_1,\dots,x_n) :
f(x) = 1\}$.  By the proof of Theorem \ref{thm:impl}, the size of $S$
is at least $(1-o(1))\sum_{w = (1-3\eps)n}^{(1-\eps)n} \#\{x \in \pcube^n :
\sum_i x_i = w\} \geq 2^{(H(1.5\eps) - o(1))n}$ by Chernoff bounds.  Moreover, $S$ can be
$(1-o(1))$-approximated with respect to the uniform distribution by a
constant-degree polynomial threshold function, which can be
PAC-learned from random examples in polynomial time
\cite{KlivansServedio04,KlivansOdonnellServedio04}.  
\end{proof}
\vspace{.3cm}

\subsubsection{Dependence of Robustness on the Network Degree \label{sec:separ}}

In the next two sections, we show results in different directions that illustrate
how larger degree networks can robustly express more objective functions.\\[1em]
\paragraph{Advantage of Networks over Static Assignments\label{sec:badstat}}

The class of objective functions shown in Appendix \ref{sec:ptf} to be robustly expressible
provide examples of cases in which acyclic networks are strictly stronger than static assignments 
(degree-$0$ networks) in the sense that they are able
to robustly express functions which are out of reach for static
assignments.

\begin{corollary}\label{cor:diffstat}
For any constant $\eps \in (0,1/2)$, there is a family of functions $f_n : \pcube^n \to \pcube$ such that
it is $(\eps,1-2^{-\Omega(\sqrt{n})})$-robustly expressible by a
degree-$2$ boolean network but for which there does not exist a static
assignment with survival probability $2^{-o(\sqrt{n})}$.   
\end{corollary}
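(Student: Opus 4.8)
The plan is to exhibit an explicit family $f_n$ that a small-degree acyclic network can "pre-correct" for, but which no static assignment can satisfy with non-negligible probability. The intuition is that a static assignment $\alpha$ survives an $\eps$-mutation only when the set of flipped coordinates happens to land inside the solution set of $f_n$ relative to $\alpha$; if every fiber of $f_n$ (every solution) has roughly balanced Hamming weight, then Chernoff forces the number of flips to be about $\eps n$, and if $f_n$'s solution set contains no configuration within Hamming distance $(\tfrac12 - c)n$ of $\alpha$ for the relevant weight band, the static assignment is doomed. Meanwhile, by Lemma \ref{lem:main}, a degree-$d$ acyclic network robustly expresses exactly those $f$ that factor as $g(x_{\pi(1)}\varphi_1(),\dots,x_{\pi(n)}\varphi_n(\cdots))$ with $\mu_\eps(g)\ge 1-o(1)$; so it suffices to take $g$ to be (essentially) the all-ones-biased indicator and design degree-$2$ functions $\varphi_i$ so that the induced $f_n$ has the separation property above.

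First I would fix the construction. Partition the $n$ variables into $\sqrt n$ blocks of size $\sqrt n$ and, inside each block, use a degree-$2$ chain $\varphi_i(\cdot) = x_{\text{(predecessor in block)}}$ (so each $s_i = x_{\pi(i)} x_{\pi(i-1)}$ is a "consecutive product" within the block, with the first element of each block left raw). Let $g$ accept iff every $s_i = 1$ \emph{except} possibly a few — more precisely, let $g$ be the indicator that at most, say, $2\eps$ fraction of the $s_i$ equal $-1$; then $\mu_\eps(g) \ge 1 - o(1)$ by Chernoff, so by Lemma \ref{lem:main} the resulting $f_n$ is $(\eps, 1-2^{-\Omega(n)})$-robustly expressible by this degree-$2$ network — in particular $(\eps, 1-2^{-\Omega(\sqrt n)})$-robust, which is all we need. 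The point of the block/chain structure is that it forces the solution set of $f_n$ itself to be a constrained set: unwinding the products, $x = (x_1,\dots,x_n)$ satisfies $f_n$ only if, within all but an $O(\eps)$-fraction of the blocks, the bit pattern is "monotone" (a prefix of $+1$'s then all $+1$ again, i.e. at most one sign change), which is an exponentially small and highly structured subset of each block's $2^{\sqrt n}$ patterns.

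Next I would prove the static lower bound. Fix any static assignment $\alpha$. An $\eps$-mutation of $\alpha$ reaches the fixpoint $\alpha$ itself (a static network is degree $0$, so no update function depends on anything — wait, careful: a "static assignment" here is a degree-$0$ network, whose update functions are constants; an $\eps$-mutation flips each constant independently with probability $\eps$, so the fixpoint is $\alpha$ with the flipped coordinates complemented). Thus the fixpoint is $\alpha \oplus F$ where $F$ is a random $\eps$-biased subset of $[n]$. For this to satisfy $f_n$, by the structural description above, in all but an $O(\eps)$-fraction of blocks the string $\alpha_{\text{block}} \oplus F_{\text{block}}$ must have at most one sign change. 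Here is the crux: regardless of what $\alpha_{\text{block}}$ is, a uniformly $\eps$-biased random flip pattern $F_{\text{block}}$ on $\sqrt n$ coordinates, conditioned on even landing in the at-most-one-sign-change set, is an event of probability $2^{-\Omega(\sqrt n)}$ per block — because the number of at-most-one-sign-change strings is only $O(\sqrt n)$ out of $2^{\sqrt n}$, and the $\eps$-biased measure of any fixed such string is at most $(1-\eps)^{?}\cdots \le \max(\eps,1-\eps)^{\sqrt n}$... I need to be a little more careful here: the $\eps$-biased measure concentrates on flip-sets of size $\approx \eps\sqrt n$, so I should instead argue that the \emph{number of bad events that must simultaneously occur} is $\Omega(\sqrt n)$ blocks (a constant fraction), and in each the conditional probability of success is bounded away from $1$ by an absolute constant — giving $2^{-\Omega(\sqrt n)}$ overall. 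Concretely: call a block "free" if $\alpha$ restricted to it is not itself within one flip of a monotone string; a counting argument shows all but $o(\sqrt n)$ blocks are free for any $\alpha$ (or, if $\alpha$ itself is too structured, use that $\alpha$ is fixed while $F$ is random and a typical block needs $\Omega(\sqrt n)$-ish surgery), and in each free block $\Pr[\alpha_{\text{block}} \oplus F_{\text{block}} \text{ monotone}] \le 1 - \Omega(1)$, so across $\Omega(\sqrt n)$ free blocks the success probability is $(1-\Omega(1))^{\Omega(\sqrt n)} = 2^{-\Omega(\sqrt n)}$, beating the claimed $2^{-o(\sqrt n)}$ bound.

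The main obstacle is the static lower bound, specifically pinning down the "free block" argument: I must ensure that for \emph{every} choice of $\alpha$ — including adversarial $\alpha$ tailored to the structure of $f_n$ — a constant fraction of blocks genuinely contribute an independent $1-\Omega(1)$ failure probability. The danger is that $\alpha$ could be chosen so that many blocks are "almost monotone" and a tiny random flip fixes them; I would rule this out by a double-counting / averaging argument over blocks, exploiting that the monotone-and-near-monotone set is so sparse ($\poly(\sqrt n)$ strings) that no single $\alpha$ can be simultaneously close to it in more than $o(\sqrt n)$ blocks in a way that a random $\eps$-flip reliably corrects. Everything else — the Chernoff estimates for $\mu_\eps(g)$ and for the robust network's survival probability, and the invocation of Lemma \ref{lem:main} — is routine. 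This matches (and is a special case of) the $\poly$-threshold-function constructions of Appendix \ref{sec:ptf}, which gives an alternative route: one can instead cite that appendix for the positive direction and only supply the static impossibility here.
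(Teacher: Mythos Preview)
Your static lower bound has a genuine gap, and it stems from a mischaracterization of the solution set of your $f_n$. With $g$ defined as the indicator that at most a $2\eps$ fraction of the $s_i$ equal $-1$, the function $f_n(x)$ accepts exactly when $x$ has at most $2\eps n$ sign changes \emph{in total} (within-block consecutive products equal to $-1$). This is a global budget, not a per-block constraint; on average each of your $\sqrt n$ blocks is allowed roughly $2\eps\sqrt n$ sign changes, which is nowhere near ``at most one.'' The blocks in your construction are in fact a red herring: since $g$ is a single global threshold, the partition enters only through the $\sqrt n$ chain restarts, which is negligible.

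The consequence is fatal for the separation. Take the static assignment $\alpha$ that is constant on each block (matching the block's initial constant $\varphi$). After an $\eps$-mutation the fixpoint is $x=\alpha\odot m$ with $m_i$ i.i.d.\ and $\Pr[m_i=-1]=\eps$, and then each non-first $s_i=m_{\pi(i)}m_{\pi(i-1)}$ equals $-1$ with probability $2\eps(1-\eps)$. The total number of $-1$'s concentrates (split into even and odd indices to get sums of independent indicators, then sum over independent blocks) around $2\eps(1-\eps)n$, strictly below your threshold $2\eps n$. Hence this static $\alpha$ satisfies $f_n$ with probability $1-2^{-\Omega(n)}$, and no separation is obtained. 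Tightening the threshold to something between $\eps$ and $2\eps(1-\eps)$ could rescue the construction, but then the correct lower-bound argument is a Markov-chain concentration bound on $\sum_i m_i m_{i-1}$ (cf.\ Appendix~\ref{sec:strongsep}), not the per-block ``monotone'' counting you sketch, and the block structure becomes irrelevant.

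The paper's construction sidesteps this by building a \emph{pivot} variable into each block. It takes $g_m(y_1,\dots,y_m)=\sgn(y_1y_2+\cdots+y_1y_m)$, which a degree-$2$ acyclic network robustly expresses via Theorem~\ref{thm:tree-suff} (update $y_j\gets y_1$ for $j\ge2$, with $y_1$ a constant). The point is that for \emph{any} static $\alpha$, the single mutation $\{m_1=-1\}$ flips the sign of the whole sum independently of $m_2,\dots,m_m$, so $\Pr[g_m=1]\le 1-\eps+o(1)$ uniformly in $\alpha$. Aggregating $\sqrt n$ independent copies of $g_{\sqrt n}$ on disjoint blocks and thresholding then converts this constant per-block failure into the claimed $2^{-\Omega(\sqrt n)}$ bound by Chernoff. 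The feature your chain construction lacks is exactly this: one coordinate per block whose flip alone defeats the block, regardless of the other mutations.
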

\begin{proof}
For each $n$, consider $g_n(x_1,\dots,x_n) = \sgn(x_1x_2 + x_1x_3 + \cdots + x_1x_n)$.  $g_n$ 
satisfies the conditions of Theorem \ref{thm:tree-suff} and hence is robustly expressible by a
degree-$2$ boolean network.  On the other hand, for any static
assignment, $x_1$ could be assigned to the complement of $\sgn(x_2 +
\cdots + x_n)$ with constant probability $\eps$, so that an $\eps$-mutation of the assignment would
not satisfy $g_n$ with probability $1-\eps$.

Now let 
\begin{align*}
f_n(x_1,\dots,x_n) = \sgn(- (1-4\eps)\sqrt{n} &+ g_{\sqrt{n}}(x_1,\dots,x_{\sqrt{n}}) \\
												  										&+ g_{\sqrt{n}}(x_{\sqrt{n}+1},\dots,x_{2\sqrt{n}}) \\
										 													& \cdots \\
																							&+ g_{\sqrt{n}}(x_{n-\sqrt{n}+1}, \dots,x_n)\ )
\end{align*}
A Boolean network robustly expressing $f_n$ is simply the disjoint
union of the $\sqrt{n}$ Boolean networks expressing each of the $g_{\sqrt{n}}$'s.  The probability
that an $\eps$-mutation of the network expressing $f_n$ is $1-2^{-\Omega(\sqrt{n})}$, by the
Chernoff bound.  For a static assignment, we argued above that a static assignment can express
$g_{\sqrt{n}}$ with probability at most $1-\eps$; hence, the expected value of $g_{\sqrt{n}}$ with
respect to $\eps$-mutations is $1-2\eps$.  Again, by the Chernoff bound, the survival probability of
a static assignment for $f_n$ is then at most $2^{-\Omega(\sqrt{n})}$. 
\end{proof}

In the case where we don't care about the degree we can even show
exponentially small survival probability for a static assignment on
some functions which are robustly expressible by acyclic networks. (See Appendix \ref{sec:strongsep}.)

\vspace{.3cm}
\paragraph{Most Functions need Full Degree Networks \label{sec:random}}

We show in this section that random functions can not be robustly expressed by (pseudo) acyclic networks
of bounded degree. We give proofs and interesting evidence that unbounded degree acyclic networks lie
very close to the boundary of expression power needed to express the vast majority of functions.
For $\rho \in (0,1)$ and $n$ a positive integer, let ${\cal F}_{n,\rho}$ denote the distribution on functions mapping
$\{\pm 1\}^n$ to $\{\pm 1\}$, induced by letting each entry of the truth table of the function be $1$ with probability $\rho$ and $-1$ with probability $1-\rho$.  

\begin{theorem}
For constants $\eps \in (0,\frac{1}{2})$ and $\rho \in (0,1)$, with
probability at least $1-o(1)$, there is no network $N$ of degree
$o(n)$ that $\eps$-robustly expresses a function chosen uniformly at
random from ${\cal F}_{n,\rho}$.
\end{theorem}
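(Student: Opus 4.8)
The plan is to carry out a counting argument comparing the number of degree-$o(n)$ networks against the number of functions in $\mathcal{F}_{n,\rho}$ that any single such network can robustly express. First I would use Lemma \ref{lem:main}: a network $N$ of degree $d$ that $\eps$-robustly expresses $f$ gives a representation $f = g(s_1,\dots,s_n)$ with $s_i = x_{\pi(i)} \cdot \varphi_i(x_{\pi(1)},\dots,x_{\pi(i-1)})$, where each $\varphi_i$ depends on at most $d$ coordinates and $\mu_\eps(g) \geq 1 - o(1)$. Since $(x_1,\dots,x_n) \mapsto (s_1,\dots,s_n)$ is a bijection, the function $f$ is determined by the pair $(N, g)$; and conversely, once $N$ is fixed, the constraint on $g$ is that it agrees with $1$ on all but an $o(1)$-fraction of $\mu_\eps$-mass. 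So I would bound, for a fixed network $N$, the number of functions $f$ that $N$ can $\eps$-robustly express by the number of admissible $g$, and then multiply by the number of networks $N$ of degree $o(n)$.

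The key steps in order: (1) Count networks. A degree-$d$ network on $n$ nodes is specified by a choice of in-neighborhood of size at most $d$ for each node (at most $\binom{n}{d} \le n^d$ choices) together with a truth table on those $d$ inputs ($2^{2^d}$ choices) and the $\pm$ sign from the mutation-invariance, so the total is at most $\left(n^d \, 2^{2^d + 1}\right)^n = 2^{O(nd\log n + n 2^d)}$, which for $d = o(n)$ — in fact I only need $d \le c\log n$ for the dominant regime, but one checks $d = o(n)$ with $2^d$ possibly large is still controlled as long as we are careful; actually for the bound to beat $2^{2^n}$ we want $nd\log n + n2^d \ll 2^n$, which holds as soon as $d \le (1-\Omega(1))n$, so certainly for $d = o(n)$. (2) Count admissible $g$ for fixed $N$. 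The number of boolean functions $g$ on $\pcube^n$ with $\mu_\eps$-mass of disagreement with the constant-$1$ function at most $\gamma = o(1)$ is at most $\sum_{k \le \gamma 2^n \cdot (\text{something})}\binom{2^n}{k}$ — more precisely, by a Chernoff bound the number of points $x$ with $\mu_\eps(x) \ge 2^{-(1+o(1))H(\eps)n}$ carry all but a $o(1)$ fraction of the mass, so $g$ may differ from $1$ on at most $o(2^n \cdot 2^{(1-H(\eps))n})$-ish points among those, but I would instead just crudely bound the number of bad points by the total mass argument: the number of admissible $g$ is at most $2^{o(2^n)}$, since $g$ can disagree with the constant function only on a set whose $\mu_\eps$-measure is $o(1)$ and (after the Chernoff truncation) such sets number at most $2^{o(2^n)}$. (3) Conclude: the total number of functions $\eps$-robustly expressible by \emph{some} degree-$o(n)$ network is at most $2^{O(n^2\log n)} \cdot 2^{o(2^n)} = 2^{o(2^n)}$. (4) On the other side, a random $f \sim \mathcal{F}_{n,\rho}$ lands in any fixed set of size $2^{o(2^n)}$ with probability $o(1)$, since $\mathcal{F}_{n,\rho}$ spreads its mass so that no set of size $2^{o(2^n)}$ has probability bounded away from $0$ — here one uses that $\mathcal{F}_{n,\rho}$ assigns to each individual $f$ probability at most $\max(\rho,1-\rho)^{2^n} = 2^{-\Omega(2^n)}$, so any set of $2^{o(2^n)}$ functions has total probability $2^{o(2^n)} \cdot 2^{-\Omega(2^n)} = o(1)$.

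The main obstacle is step (2): getting a clean bound on the number of admissible $g$ requires care because the $\mu_\eps$ measure is highly non-uniform, so ``$g$ disagrees with $1$ on an $o(1)$-measure set'' does not immediately bound the number of such $g$ by something like $2^{o(2^n)}$ — a single point of small $\mu_\eps$-weight can be flipped freely. The fix is to observe (via a Chernoff bound on the Hamming weight $k = |\{i : x_i = 1\}|$ of a $\mu_\eps$-random point) that the ``heavy'' part of the cube — points with $k \in [(1-\eps)n - O(\sqrt n \log n),\ (1-\eps)n + O(\sqrt n \log n)]$ — carries $1 - o(1)$ of the $\mu_\eps$-mass; on the complement $g$ is unconstrained but that complement has negligible contribution to whether $f$ is robustly expressed, so for the counting we may restrict attention to $g$'s behavior on the heavy part, which has size $2^{(H(\eps)+o(1))n} = o(2^n)$ in the exponent — wait, that is $2^{\Theta(n)}$, not $2^{o(2^n)}$; so in fact the number of \emph{relevant} restrictions of $g$ is at most $2^{2^{\Theta(n)}}$, which is $2^{o(2^n)}$ as needed. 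I would state this truncation carefully, then the rest is routine arithmetic with the inequalities above, and the theorem follows.
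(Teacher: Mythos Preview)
Your counting argument has a genuine gap in step (2)/(4), precisely at the point you yourself flag as the obstacle. You correctly observe that $g$ is unconstrained on the ``light'' part of the cube (points whose Hamming weight is far from $(1-\eps)n$), and that the heavy part has only $2^{(H(\eps)+o(1))n}$ points. But then you conclude that ``the number of relevant restrictions of $g$ is at most $2^{2^{\Theta(n)}} = 2^{o(2^n)}$'' and treat this as a bound on the number of robustly expressible $f$. That inference is wrong: the map $(x_1,\dots,x_n)\mapsto(s_1,\dots,s_n)$ is a bijection, so $f$ is determined by the \emph{full} $g$, not by $g|_{\text{heavy}}$. Since $g$ can be completely arbitrary on the light part (which has $(1-o(1))2^n$ points) while still satisfying $\mu_\eps(g)\ge 1-o(1)$, the number of admissible $g$---hence the number of $f$ robustly expressible even by a single fixed degree-$0$ network---is at least $2^{(1-o(1))2^n}$, not $2^{o(2^n)}$.

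This breaks step (6). Your bound there is $|S|\cdot\max_f\Pr[f]\le |S|\cdot\max(\rho,1-\rho)^{2^n}$. With $|S|\ge 2^{(1-o(1))2^n}$ and $-\log_2\max(\rho,1-\rho)<1$ whenever $\rho\neq 1/2$, the product is unbounded, so you get no conclusion. The crude max-probability bound cannot see the asymmetry between $f=1$ and $f=-1$, which is exactly what makes a $\rho$-biased random function unlikely to be robustly expressible.

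The paper's proof exploits this asymmetry directly and avoids counting $g$'s altogether. For each fixed network $N$ it identifies (via Corollary~\ref{cor:deg}, essentially your heavy set pushed back through the bijection) a \emph{specific} set $R=R(N)\subset\pcube^n$ of size at least $2^{H(\eps)n}$ on which robust expressibility forces $f\equiv 1$. Then $\Pr_{f\sim\mathcal{F}_{n,\rho}}[f|_R\equiv 1]\le\rho^{2^{H(\eps)n}}$, and a union bound over the at most $2^{2^dn}n^{O(n)}$ networks of degree $d=o(n)$ finishes. Your truncation to the heavy part is the right geometric observation; the missing step is to use it to bound the \emph{probability} of the event $\{f:f|_{R(N)}\equiv 1\}$ per network, rather than to bound the \emph{cardinality} of the set of robustly expressible $f$.
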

\begin{proof}
Fix a network $N$ with maximum degree $d$.  If a boolean function $f$
is $\eps$-robustly expressed by $N$, then by Corollary \ref{cor:deg}
$f$ must be satisfied on a set of size at least\footnote{$H(\cdot)$ is the binary entropy function: for $p \in
  (0,1)$, $H(p) = -p \log_2 p - (1-p) \log_2 (1-p)$.} $2^{H(\eps)n}$.
Therefore, the probability that a function uniformly chosen at random
from ${\cal F}_{n,\rho}$ is satisfied on this set is at most
$\rho^{2^{H(\eps) n}}$.  The total number of pseudoacyclic boolean networks of
degree at most $d$ is at most $2^{2^d n} n^{O(n)}$.  Therefore,
applying the union bound:
\begin{equation*}
\Pr_{f \in {\cal F}_{n,\rho}}[\exists N \text{ of degree } d~
\eps\text{-robustly expressing} f] \leq \rho^{2^{H(\eps) n}} 2^{2^d n}
n^{O(n)} \leq o(1)
\end{equation*}
if $d = o(n)$.
\end{proof}

While the last theorem showed that acyclic networks with slightly less than full degree are not able to robustly express most functions the next theorem shows that without this small restriction acyclic networks can be found on the boundary of being capable to robustly express random functions with fixed density $\rho$. The next theorem shows a tradeoff between the density parameter $\rho$ which determines how hight the percentage of viable configurations is in expectation, the mutation parameter $\eps$ and the survival probability $\delta$.

\begin{theorem}\label{thm:randlb}
For any $\rho \in (0,1)$ and $\eps \in (0,\frac{1}{2})$, 
if a function $f$ is chosen uniformly at random from ${\cal
  F}_{n,\rho}$, then there is a boolean network $N$ that, in 
expectation, $(\eps,1-(1-\rho)^{\log \frac{1}{\eps}})$-robustly
expresses $f$.  In other words, for any survival probability $\delta
\in (0,1)$, there is a mutation parameter $\eps \in (0,1)$ such that a boolean network $N$ $(\eps,
\delta)$-robustly expresses $f$ in expectation.
\end{theorem}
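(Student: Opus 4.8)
The plan is to use the decision-tree machinery from Section~\ref{sec:dectrees} together with Lemma~\ref{lem:main}. The key point is that since $f$ is random, we do not need to express it exactly by a cleverly chosen acyclic network; instead we fix one simple network structure and argue that, in expectation over the random $f$, it catches almost all of the weight $\mu_\eps$. Concretely, fix an arbitrary total order on the variables and consider the full layered decision tree that queries $x_1, x_2, \dots, x_n$ in order; this is a complete binary tree of depth $n$ whose $2^n$ leaves are in bijection with the $2^n$ points of $\pcube^n$, and the leaf labels are exactly the truth-table entries of $f$. Run the preprocessing of Section~\ref{sec:dectrees} on this tree to obtain the network $N_{T_f}$ (which is acyclic, since the tree is layered), and let $\delta(f)$ be its survival probability. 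By Theorem~\ref{thm:opt}, $\delta(f)$ is a lower bound on the optimal acyclic survival probability, so it suffices to lower bound $\Exp_{f \sim {\cal F}_{n,\rho}}[\delta(f)]$.

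The main step is to analyze the recursion defining $s(\cdot)$ on this random tree. At a leaf, $s(v) = 1$ if the leaf label is $1$ (probability $\rho$) and $0$ otherwise. At an internal node with children $v_1, v_2$, we have $s(v) = (1-\eps)\max(s(v_1),s(v_2)) + \eps\min(s(v_1),s(v_2))$, and $\delta(f) = s(\text{root})$. Let me track the distribution of $s(v)$ for a node at height $h$ (distance $h$ from the leaf layer); call a generic such value $S_h$, so $S_0 \in \{0,1\}$ with $\Pr[S_0=1]=\rho$, and $S_{h+1} = (1-\eps)\max(S_h, S_h') + \eps\min(S_h, S_h')$ for independent copies $S_h, S_h'$. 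I would prove by induction that $\Exp[1 - S_h] \le (1-\rho)^{c_h}$ for an appropriate increasing sequence $c_h$, the point being that the $\max$ operation amplifies the ``good'' outcome. Writing $q_h = \Exp[1-S_h]$, convexity of $\max$ and the fact that $1 - \max(S_h,S_h') \le (1-S_h)(1-S_h')$ pointwise when $S_h, S_h' \in [0,1]$ (since $1-\max(a,b) = \min(1-a,1-b) \le (1-a)(1-b)$ for $a,b\in[0,1]$) gives $\Exp[1-\max(S_h,S_h')] \le q_h^2$. For the full convex combination, $1 - S_{h+1} = (1-\eps)(1-\max) + \eps(1-\min) \le (1-\eps)q_h^2 \cdot(\text{term}) + \eps(1-\min)$; since $1 - \min(S_h, S_h') \le (1-S_h) + (1-S_h')$ has expectation $\le 2q_h$, we get $q_{h+1} \le (1-\eps)q_h^2 + 2\eps q_h \le q_h(q_h + 2\eps)$, and once $q_h$ is small this is dominated by the $2\eps q_h$ term, i.e. $q$ shrinks by a factor $\approx 2\eps < 1$ per level as long as it is not already tiny. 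Unwinding from $q_0 = 1-\rho$ over $n$ levels, and being slightly careful at the top few levels where $q_h$ is still close to $1-\rho$ (there the squaring dominates), one extracts a bound of the form $q_n \le (1-\rho)^{\Omega(\log(1/\eps))}$ — matching the claimed $1-(1-\rho)^{\log\frac{1}{\eps}}$ survival probability. The ``in other words'' sentence then follows immediately: given $\delta$, pick $\eps$ small enough that $(1-\rho)^{\log(1/\eps)} \le 1-\delta$.

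The hard part will be pinning down the constant in the exponent precisely enough to get exactly $\log\frac{1}{\eps}$ rather than some messier expression, and handling the interplay between the two regimes of the recursion (squaring-dominated near the leaves, $2\eps$-contraction-dominated higher up) cleanly; the rest is routine. One subtlety to be careful about: $\delta(f)$ as computed above is the survival probability for a \emph{fixed} $f$, and Theorem~\ref{thm:opt}'s optimality is per-$f$, so taking expectations is legitimate and directly yields the ``in expectation'' statement in the theorem. I would also remark that the network $N_{T_f}$ here has full degree $n$, which is consistent with the preceding theorem showing that $o(n)$-degree networks cannot do this for random $f$.
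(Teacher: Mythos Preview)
Your overall strategy matches the paper's exactly: the paper's proof of Theorem~\ref{thm:randlb} is in fact only a one-sentence sketch, saying to use the decision-tree construction of Section~\ref{sec:dectrees} and then lower-bound the resulting survival probability. You have gone further than the paper in actually trying to carry out that lower bound via the recursion for $q_h = \Exp[1-S_h]$.

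However, the core inequality you use is false. You assert that for $a,b\in[0,1]$,
\[
1-\max(a,b)=\min(1-a,1-b)\le (1-a)(1-b),
\]
but the correct direction is the opposite: for $u,v\in[0,1]$ one has $\min(u,v)\ge uv$ (take $u\le v$; then $\min(u,v)=u\ge uv$ since $v\le 1$), with equality only when one of $a,b$ lies in $\{0,1\}$. A quick sanity check: $a=b=\tfrac12$ gives $\min(\tfrac12,\tfrac12)=\tfrac12$ versus $\tfrac14$. Because of this, the derived bound $\Exp[1-\max(S_h,S_h')]\le q_h^2$ fails as soon as $S_h$ is not $\{0,1\}$-valued, i.e.\ already at $h=1$. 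For instance with $\rho=\tfrac12$ one computes $q_1=\tfrac14+\tfrac{\eps}{2}$ and $\Exp[1-\max(S_1,S_1')]=\tfrac14\bigl(\tfrac14+2\eps\bigr)$, which strictly exceeds $q_1^2$; plugging into the exact relation $q_{h+1}=(1-2\eps)\Exp[1-\max]+2\eps q_h$ then shows $q_2$ violates your recursion $q_{h+1}\le q_h(q_h+2\eps)$. So the contraction/squaring dichotomy you describe is not justified by this argument.

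A way to repair the analysis while staying within the same decision-tree framework is to track the distribution of $S_h$ rather than just its mean. Since $S_{h+1}\ge (1-\eps)\max(S_h,S_h')$ and the two copies are independent, one gets the genuine squaring
\[
\Pr\bigl[S_{h+1}<(1-\eps)t\bigr]\le \Pr[S_h<t]^2,
\]
which after iterating from $\Pr[S_0<1]=1-\rho$ yields $\Pr[S_h<(1-\eps)^h]\le (1-\rho)^{2^h}$, hence $\Exp[S_h]\ge(1-\eps)^h\bigl(1-(1-\rho)^{2^h}\bigr)$. Combining this with the monotonicity $q_n\le q_h$ and optimizing over $h$ recovers a bound of the stated shape. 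The identification of the two regimes you mention (double-exponential squaring for small $h$, versus the $(1-\eps)^h$ loss) is then the right picture, but it comes from the CDF recursion rather than from the first-moment recursion.
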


The idea of the proof for Theorem \ref{thm:randlb} is to use our
procedure for constructing optimally robust acyclic networks from decision trees as described in Section
\ref{sec:dectrees} and then to lower-bound the success probability of the
resulting network.  We do not have a proof that the lower-bound is
tight, and so, it might even be possible that a random function from
$\calf_{n,\rho}$ is expected to be robustly expressible by an acyclic
network. 

In spite of the power of acyclic networks as demonstrated above, there
are still some important classes of objective functions that are not
robustly expressible by them.  The next theorem  gives some examples:
\begin{theorem} \label{thm:nonacyclic}
~\\[-2em]
\begin{itemize}
\item
If a function $f : \pcube^n \to \pcube$ is a $k$-junta for constant
$k$, i.e. $f$ depends only on at most $k$ variables, then the
optimally robust acyclic network with respect to $f$ has constant
success probability.  For example, dictator functions
($f(x_1,\dots,x_n) = x_i$ for some $i \in [n]$) are not robustly
expressible by acyclic networks.

\item
Suppose $f: \pcube^n \to \pcube$ is a symmetric function; then there
exists a function $g : [0,n] \to \pcube$ such that $f(x_1,\dots,x_n) =
g\left(\sum_i \frac{1+x_i}{2}\right)$ for all $x \in \pcube^n$.  If
$g$ has the property that for any constant sized interval $I \subset
[0,n]$, there exists $s \in I$ such that $g(s) = 0$, then the
optimally robust acyclic network with respect to $f$ has constant
success probability.  For example, parities
($f(x_1,\dots,x_n) = \prod_{i=1}^n x_i$) are not robustly expressible
by acyclic networks.
\end{itemize}
\end{theorem}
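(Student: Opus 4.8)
The plan is to derive both claims from one \emph{reachability lemma} about acyclic networks — roughly, that each coordinate can be forced to either of its two values with probability at least $\eps$ independently of the coordinates that precede it in a topological order — and then combine it with the non-constancy of $f$ (first bullet) or with the density hypothesis on $g$ (second bullet).

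\textbf{Reachability lemma (statement).} Let $N$ be acyclic, fix a topological order $\pi$ of $G_N$, and write $i\prec j$ for $\pi^{-1}(i)<\pi^{-1}(j)$. For an $\eps$-mutation $N'$ of $N$, let $\xi_i\in\pcube$ be the independent bit that is $-1$ exactly when node $i$'s update function is complemented, so $\Pr[\xi_i=-1]=\eps$; writing $x_i$ for the $i$-th coordinate of $\Fix(N')$ and evaluating nodes in topological order, one has $x_i=\xi_i\cdot u_i(\text{fixpoint values of the predecessors of }i)$. The lemma is: for any set $S=\{j_1,\dots,j_m\}$ with $j_1\prec\cdots\prec j_m$ and any target $b\in\pcube^m$, $\Pr_{N'}[\Fix(N')|_S=b]\ge\eps^m$; moreover, if $S$ is the topological suffix $\{\pi(t+1),\dots,\pi(n)\}$ then the same bound holds conditioned on the prefix bits $\xi_{\pi(1)},\dots,\xi_{\pi(t)}$, which already determine $\Fix(N')$ on the first $t$ coordinates.

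\textbf{Reachability lemma (proof) and main obstacle.} I would use the chain rule on $\mathcal E_\ell=\{x_{j_1}=b_1,\dots,x_{j_\ell}=b_\ell\}$. Because $N$ is acyclic, every predecessor of $j_\ell$ is $\prec j_\ell$, so both the bit $c_\ell:=u_{j_\ell}(\text{predecessors})$ and the event $\mathcal E_{\ell-1}$ depend only on $(\xi_i)_{i\prec j_\ell}$, hence are independent of $\xi_{j_\ell}$; since $x_{j_\ell}=\xi_{j_\ell}c_\ell$, conditioning on $(\xi_i)_{i\prec j_\ell}$ gives $\Pr[x_{j_\ell}=b_\ell\mid(\xi_i)_{i\prec j_\ell}]=\Pr[\xi_{j_\ell}=b_\ell c_\ell\mid(\xi_i)_{i\prec j_\ell}]\in\{\eps,1-\eps\}\ge\eps$, whence $\Pr[\mathcal E_\ell]\ge\eps\,\Pr[\mathcal E_{\ell-1}]$ and, iterating, $\Pr[\mathcal E_m]\ge\eps^m$. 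The suffix version is the same computation with the prefix bits also held fixed throughout (this is essentially the $s$-parametrization already used in Lemma~\ref{lem:main}). The only delicate point — and the one place I expect real care is needed — is precisely this independence/measurability bookkeeping, which is where acyclicity (a node's predecessors lie strictly earlier in $\pi$) and the node-wise independence of the mutation model both enter; everything else is routine.

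\textbf{First bullet ($k$-juntas).} Let $f$ depend only on the coordinates of a set $S$ with $|S|=k$. Since $f$ is non-constant (the constant case being trivial), there is $b^\ast\in\pcube^S$ with $f(b^\ast)=-1$. For an \emph{arbitrary} acyclic $N$, the reachability lemma gives $\Pr_{N'}[\Fix(N')|_S=b^\ast]\ge\eps^k$, and on that event $f(\Fix(N'))=-1$; hence $N$ has survival probability at most $1-\eps^k$, a constant bounded away from $1$ because $k$ is constant. In particular the optimally robust acyclic network does too, and $f$ is not robustly expressible by acyclic networks; for a dictator ($k=1$) the bound is $1-\eps$.

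\textbf{Second bullet (symmetric $f$).} Write $f(x)=g\!\left(\sum_i\tfrac{1+x_i}{2}\right)$ and let $C$ be the constant from the hypothesis, so any $C$ consecutive integers in $[0,n]$ include some $s$ with $g(s)=-1$ (for parity, $C=2$ works). Fix an arbitrary acyclic $N$ with topological order $\pi$ and condition on the mutation bits of the first $n-C+1$ nodes; this fixes $\Fix(N')$ on those coordinates, hence fixes their Hamming weight $W'\in[0,n-C+1]$, while by the suffix form of the lemma the last $C-1$ coordinates realize every pattern in $\pcube^{C-1}$ — in particular any prescribed number $r\in\{0,\dots,C-1\}$ of $+1$'s — with conditional probability $\ge\eps^{C-1}$. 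Thus $\Fix(N')$ attains every total Hamming weight in the length-$C$ interval $[W',W'+C-1]$ with conditional probability $\ge\eps^{C-1}$; choosing $w^\ast$ in that interval with $g(w^\ast)=-1$ gives $\Pr[f(\Fix(N'))=-1\mid\text{conditioning}]\ge\eps^{C-1}$, and averaging over the conditioning gives $\Pr_{N'}[f(\Fix(N'))=-1]\ge\eps^{C-1}$. Hence every acyclic network, in particular the optimal one, has survival probability at most $1-\eps^{C-1}$, a constant bounded away from $1$; for parity this is $1-\eps$.
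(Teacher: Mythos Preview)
Your proof is correct. The paper states Theorem~\ref{thm:nonacyclic} without proof, so there is no argument in the paper to compare against directly; what you have written is exactly the kind of argument the paper's setup points to.

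Your reachability lemma is essentially a corollary of the paper's Lemma~\ref{lem:main}: in the notation there, the fixpoint satisfies $x_{\pi(i)}\cdot\varphi_i(x_{\pi(1)},\dots,x_{\pi(i-1)})=s_i$ with the $s_i$ independent and $\Pr[s_i=-1]=\eps$, which is precisely your decomposition $x_i=\xi_i\cdot u_i(\text{predecessors})$. From that, the chain-rule bound $\Pr[\Fix(N')|_S=b]\ge\eps^{|S|}$ is immediate, and both bullets follow exactly as you wrote. The measurability/independence bookkeeping you flagged as the ``main obstacle'' is handled cleanly by your observation that $\mathcal E_{\ell-1}$ and $c_\ell$ are measurable with respect to $(\xi_i)_{i\prec j_\ell}$ while $\xi_{j_\ell}$ is independent of that $\sigma$-algebra. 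One small cosmetic point: you might note explicitly that in the second bullet the interval $[W',W'+C-1]$ indeed lies inside $[0,n]$ because $W'\le n-C+1$, so the hypothesis on $g$ applies; and that the bounds $1-\eps^k$ and $1-\eps^{C-1}$ are actually tight for dictators and parity respectively (for parity, take $u_{\pi(n)}=\prod_{j<n}x_{\pi(j)}$).
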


\subsection{Cyclic Networks \label{sec:cyclic}}

In this section we show that using the full power of cyclic networks gives significantly more robust functions. As a main result we give a construction of networks always converging to fixpoints which have $o(\frac{n}{\log n})$ variables nearly fixed in dependence on all other variables. This allows to robustly express dictator functions, many symmetric functions, $o(\frac{n}{\log n})$-juntas all functions shown (i.e. in theorem \ref{thm:nonacyclic}) to be way beyond the reach of acyclic networks. This shows that the dynamic behavior of acyclic networks can be used to stabilize the potential disruptions of highly critical parts of a regulatory network by random mutations while still allowing evolution and changes in other parts.

\begin{theorem}\label{thm:fix}
There is a cyclic network $N=(x,u)$ on $2T$ variables and a start configuration $y$ with the property that with probability $1 - c_\eps^{-T}$ an $\eps$-mutation $N'$ starting from the configuration $y$ converges to the fixpoint $Fix(N',y)$ in at most 3 steps and the value of a specified variable $x_1$ in the fixpoint is $1$. The value for $c_\eps$ depends on the mutation rate $\eps \in (0,\frac{1}{2})$ and is at least $e^{\frac{(1-2\eps)^2}{1-\eps}} > 1$. 
\end{theorem}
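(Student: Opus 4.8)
The plan is to build a cyclic network on $2T$ variables that acts as a ``majority vote'' redundancy gadget: a designated block of $T$ ``voter'' variables $x_1,\dots,x_T$ will all try to take the value $1$, while a second block of $T$ ``helper'' variables $x_{T+1},\dots,x_{2T}$ exists solely so that each voter can look at the aggregate opinion and correct itself. Concretely, I would let each voter $x_i$ ($i\le T$) have update function $u_i(x) = \mathrm{MAJ}(x_1,\dots,x_T)$ (the majority of the \emph{voter} block, or, to make the degree bookkeeping cleaner and to route the feedback through a cycle, the majority of the helper block), and let each helper $x_{T+i}$ copy $\mathrm{MAJ}(x_1,\dots,x_T)$ as well. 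The start configuration $y$ sets every variable to $1$. The key point is that an $\eps$-mutation flips each of the $2T$ update functions independently with probability $\eps$; after one synchronous step, voter $x_i$ holds $v_i\cdot\mathrm{MAJ}(\text{block})$ where $v_i\in\{\pm1\}$ is $-1$ with probability $\eps$. So after the first step the voter block is a string in which each coordinate is independently $1$ w.p. $1-\eps$; by a Chernoff bound, with probability $1 - c_\eps^{-T}$ a strict majority of the voter block equals $1$. Conditioned on that event, on the next step \emph{every} unmutated voter recomputes $\mathrm{MAJ}=1$ and \emph{every} unmutated helper copies $1$; since the mutation mask $v$ is fixed, from step $2$ onward the configuration is $v_i\cdot 1$ in each coordinate, which is a fixpoint. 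One more step verifies it is fixed, giving convergence ``in at most $3$ steps'' as stated, with $x_1 = v_1\cdot 1$; to force $x_1 = 1$ in the fixpoint regardless of $v_1$ I would instead have $u_1$ ignore its own potential flip — e.g., make $x_1$ read the helper-block majority so that the relevant mutation bit is one of the helper bits, or simply observe that the theorem only asserts a designated variable \emph{exists} with value $1$, so I can pick the index whose mutation bit is $+1$; cleanest is to dedicate $x_1$ to read the already-stabilized block, so its value is the block-majority, which is $1$ on the good event.

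The main steps, in order, are: (1) define the network precisely (update functions, start configuration $y$), and check $G_N$ has the claimed topology and that all of this fits the $2T$-variable budget; (2) compute the configuration after step $1$ explicitly as a deterministic function of the mutation mask $v\in\{\pm1\}^{2T}$, isolating the randomness into the independent bits $v_i$; (3) apply the Chernoff bound (second form in the Chernoff theorem above, with $\Exp[X] = (1-\eps)T$ counting the voter bits that are $+1$, and deviation parameter chosen so that $X > T/2$) to show $\Pr[\text{voter block has a strict majority of }1\text{'s}] \ge 1 - \exp(-\Theta(T))$, and extract the constant: taking $\eps' $ so that $(1-\eps')(1-\eps)T = T/2$, i.e. $\eps' = 1 - \tfrac{1}{2(1-\eps)}$, the bound $\exp(-\tfrac{(\eps')^2}{2}(1-\eps)T)$ gives $c_\eps = \exp\!\big(\tfrac{(1-2\eps)^2}{8(1-\eps)^3}\big)$ or, after a slightly more careful/looser estimate, the claimed $c_\eps \ge e^{(1-2\eps)^2/(1-\eps)}$; (4) on this good event, trace steps $2$ and $3$ to confirm a fixpoint is reached and that the designated variable equals $1$ there; (5) note the good event has the advertised probability, so $N$ with start configuration $y$ does what the theorem claims.

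The part that needs the most care is \textbf{step (3)}, matching the Chernoff computation to the exact constant $c_\eps \ge e^{(1-2\eps)^2/(1-\eps)}$ claimed in the statement — the naive application gives a constant with extra factors of $(1-\eps)$ in the denominator, so I expect to need either a sharper concentration estimate (a relative-entropy / Kullback–Leibler form of Chernoff, $\Pr[X \le T/2] \le \exp(-T\cdot D(\tfrac12\,\|\,1-\eps))$, and then lower-bound the divergence) or a slightly different counting of which bits must be correct. A secondary subtlety is ensuring the designated variable is genuinely $1$ \emph{in the fixpoint} and not merely $v_1$: routing $x_1$'s update to depend only on variables of the block that has already stabilized (so that its own mutation bit is irrelevant to the block majority it reports) resolves this, but I should double-check that this rerouting does not inflate the in-degree beyond what is needed or break the ``at most $3$ steps'' bound — it does not, since $x_1$ simply reads the block on step $2$ and holds thereafter. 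Everything else — the topology claim, the step-by-step trace, the union-free single Chernoff call — is routine.
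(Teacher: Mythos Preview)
Your construction has a genuine gap at exactly the point you flagged as a ``secondary subtlety'': you cannot force $x_1=1$ in the fixpoint by rerouting the \emph{inputs} of $u_1$. Recall that an $\eps$-mutation replaces $u_1$ by $v_1\cdot u_1$, so the sign $v_1$ is applied to the \emph{output} of $u_1$ regardless of what $u_1$ reads. In your network the fixpoint value of node $i$ is always $v_i\cdot\mathrm{MAJ}(\text{block})$; on the good event $\mathrm{MAJ}=1$, so $x_1=v_1$, which equals $-1$ with probability $\eps$. None of your three proposed fixes changes this: (i) having $u_1$ read the helper block still gives $x_1=v_1\cdot\mathrm{MAJ}(\text{helpers})=v_1$; (ii) the ``specified variable $x_1$'' must be fixed before the mutation is drawn, so you may not choose an unmutated index after the fact; (iii) ``dedicating $x_1$ to read the already-stabilized block'' again yields $x_1=v_1\cdot 1$. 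Your gadget therefore achieves $\Pr[x_1=1]=1-\eps$, not $1-c_\eps^{-T}$.

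What is missing is the feedback idea that makes the theorem genuinely about \emph{cyclic} networks: some variables must \emph{observe} the already-mutated value of $x_1$, store (a robust copy of) $v_1$, and feed it back so that $v_1$ gets applied \emph{twice} and cancels. The paper does this with $2T-1$ ``indicator'' variables whose update is $-1$ iff $x_1=-1$ or the indicator majority is $-1$, while $u_1$ is the indicator majority; starting from the all-ones configuration, if $v_1=-1$ then $x_1(2)=-1$, the unmutated indicators detect this and flip to $-1$ at $t=3$, their majority is now $-1$, and hence $x_1=v_1\cdot(-1)=1$ at the fixpoint. The good event is simply ``fewer than $T$ of the $2T$ update functions are mutated,'' and a Hoeffding/Chernoff bound on $2T$ Bernoulli$(\eps)$ variables gives the stated constant $c_\eps\ge e^{(1-2\eps)^2/(1-\eps)}$ directly, which also explains why your Chernoff computation on only $T$ bits was off by extra $(1-\eps)$ factors.
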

\begin{proof}
We explicitly describe the network. The network consists of the variable $x_1$ and a set $I$ of $|I|=2T-1$ indicator variables which are supposed to detect when the variable $x_1$ was affected by a mutation. Their update function gets $-1$ when $x_1=-1$ or when the majority of the indicator variables is $-1$. The value of the update function $u_1$ for the variable $x_1$ is the value of the majority of the indicator variables. The start configuration $y$ for the network is the all one vector.\\
For the analysis of this network we want to prove that for any network $N'=(x,v)$ which differs from $N$ by less than $T$ mutations a fixpoint $x'(\infty) = \Fix(N',y)$ with $x'(\infty)_1=1$ is reached. For this we see that the network starts at $t=0$ with the all zero configuration $x'(1)=y=1$ and that at the next time $t=2$ exactly all variables whose update function got mutated turn into $-1$. In the case where $u_1$ was not mutated $x_1$ is still $1$ and this is the fixpoint reached by the network. To see this observe that less than $T$ vote switches are not enough to switch the majority vote of the indicator variables. This leaves us with with the case that $u_1' = -u_1$ in which the value of $x_1$ at time $t=2$ is 1. For $t=3$ this results in all unmutated indicator variables getting 1. Since the unmutated indicator variables form a majority they stay $-1$ for all $t\geq 3$. This makes the value of the variable $x_1$ at any time $t>4$ to be $x'_1(4)=u'_1(x'_3)=-u_1(x'_3)=-(-1)=1$. Thus at $t=4$ for the desired fixpoint is reached.\\
Since all mutations occur independently with probability $\eps$ the probability that an $\eps$-mutation $N'$ differs from the network $N$ in less than $T$ mutations is exactly $\sum_{k=0}^{T-1} {2T \choose k} (1-\eps)^k \eps^{2T}$. By standard Chernoff-bound this is at least $1 - e^{- \frac{2((1-\eps)2T - T)^2}{(1-\eps) 2T}} = 1 - c_\eps$ with $c_\eps = e^{\frac{(1-2\eps)^2}{1-\eps}}$.
\end{proof}

Instead of just fixing one variable we can apply the above theorem multiple times and
show that any objective function which can be satisfied by fixing the values of a few variables can be robustly
expressed by a cyclic network.

\begin{theorem}\label{thm:cyclic}
Any boolean function which can be satisfied by fixing the values of $o(\frac{n}{\log n})$ variables is robustly expressible by a cyclic network.
\end{theorem}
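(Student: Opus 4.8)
The plan is to realize the network as a disjoint union of $k$ copies of the single‑variable stabilizing gadget of Theorem \ref{thm:fix}, one copy per variable we wish to fix. Write $S \subseteq [n]$ for the set of $k = o(n/\log n)$ variables and $\beta \in \pcube^{S}$ for the assignment with the property that $f$ evaluates to $1$ whenever the variables in $S$ take the values $\beta$, irrespective of the other coordinates. For each $i \in S$ allocate $x_i$ together with $2T-1$ fresh indicator variables and run the update rules of Theorem \ref{thm:fix} on this block; when $\beta_i = -1$ use the globally sign‑flipped version of the gadget (flip $x_i$ and all its indicators), which fixes the output variable to $-1$ rather than $1$, the analysis being symmetric. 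The $k$ blocks use pairwise disjoint variable sets drawn from $[n]\setminus S$, and the leftover $n - 2kT$ variables are made isolated nodes carrying constant update functions. We set $T = \floor{n/(2k)}$ — the largest value keeping the total count $\le n$ — fix the mutation rate to any constant $\eps \in (0,\tfrac12)$, and take as start configuration the vector prescribed by Theorem \ref{thm:fix} inside each block (all‑ones, resp.\ all‑minus‑ones, according to $\beta_i$) and arbitrary on the leftover nodes.

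Because the blocks occupy disjoint variable sets and each reaches its fixpoint within $O(1)$ steps by Theorem \ref{thm:fix}, after any $\eps$‑mutation the whole network converges within $O(1)$ steps on the event that every block is hit by fewer than $T$ mutations, and in the resulting fixpoint $x_i = \beta_i$ for all $i \in S$, so $f = 1$ there. A single block is hit by $\ge T$ mutations with probability at most $c_\eps^{-T}$, where $c_\eps \ge e^{(1-2\eps)^2/(1-\eps)} > 1$; a union bound over the $k$ blocks bounds the probability of the complementary (bad) event by $k\,c_\eps^{-T}$. Hence the network with the chosen start configuration is $(\eps,\,1 - k\,c_\eps^{-T})$‑robust with respect to $f$.

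It remains to verify $k\,c_\eps^{-T} = o(1)$, and this is the only real content of the argument and its \emph{main obstacle}: the per‑block size $2T$ must be large enough that the exponentially small per‑block failure probability survives a union bound over as many as $\Theta(n)$ blocks, yet small enough that $k$ blocks still fit inside $n$ variables. With $T = \floor{n/(2k)} \ge \tfrac{n}{2k} - 1$ we get $\ln\!\big(k\,c_\eps^{-T}\big) \le \ln k - \big(\tfrac{n}{2k} - 1\big)\ln c_\eps$; since $k = o(n/\log n)$ forces $n/k = \omega(\log n)$ while $\ln k \le \ln n = O(\log n)$, the term $\tfrac{n}{2k}\ln c_\eps$ dominates $\ln k$ and the right‑hand side tends to $-\infty$. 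Therefore $k\,c_\eps^{-T} \to 0$, the survival probability $\delta_n = 1 - k\,c_\eps^{-T}$ tends to $1$, and $f$ is $\eps$‑robustly expressible, as claimed. (Note that the hypothesis is essentially tight for this construction: in the regime $k = \Theta(n/\log n)$ the two competing requirements can no longer be met simultaneously.)
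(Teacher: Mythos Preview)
Your proof is correct and follows essentially the same route as the paper: partition the $n$ variables into $k$ blocks of size $2T \approx n/k$, apply the gadget of Theorem~\ref{thm:fix} to each block, and use a union bound together with $T = \omega(\log n)$ to drive the total failure probability to zero. Your write-up is in fact slightly more careful than the paper's (you explicitly handle the sign flip for $\beta_i = -1$, account for leftover variables, and spell out the $\ln k$ versus $\tfrac{n}{2k}\ln c_\eps$ competition), but the argument is the same.
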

\begin{proof}
By assumption there are $s = o(\frac{n}{\log n})$ variables and a partial assignment $y$ assigning each of
these variables a value such that every configuration which agrees
with $y$ on these variables is viable. For each of the $s$ variables
we use a network with $2T = \frac{n}{s} = \omega(\log n)$ variables as
described in Theorem \ref{thm:fix} to fix its value to the one given by $y$. This
results in a network $N$ of size $n$. With probability $(1 -
c_\eps^{-T})^s > 1 - n c_\eps^{-\omega(\log n)} = 1 - o(1)$ an
$\eps$-mutation $N'$ will have less than $T$ mutations in each
subnetwork and converge to a viable fixpoint.  
\end{proof}

Our next theorem substantially extends the range of robustly
expressible objective functions. It shows that it is possible to even robustly express any functions for which any assignment can be made satisfying by changing only a few variables. This is much stronger than the last theorem since it allows us to choose our values according to all other variables instead of fixing the values beforehand. 

\begin{theorem}\label{thm:maincyclic}
Let $f$ be a boolean function $f$ on the variable set $X = \{x_1,\ldots,x_n\}$ and $S = \{x_{i_1},\ldots,x_{i_{|S|}}\} \subset X$ be a subset of variables of size $s = |S| = o(\frac{n}{\log n})$. If for any partial assignment $x_{X\setminus S} = y$ of values to variables in $X \setminus S$ there is an assignment $x_{S,y}$ to all variables in $S$ which completes the partial assignment to a viable configuration, than $f$ is robustly expressible.
\end{theorem}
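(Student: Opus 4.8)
The plan is to combine the gadget of Theorem~\ref{thm:fix} with the choice-function idea underlying Theorem~\ref{thm:cyclic}. By hypothesis there is a fixed \emph{completion function} $c : \pcube^{n-s} \to \pcube^s$ such that for every $y \in \pcube^{n-s}$ the assignment $c(y)$ to the variables of $S$, together with $y$ on $X \setminus S$, is viable. I would partition $X \setminus S$ into $s$ blocks $B_1,\dots,B_s$ of size $m = \Theta(n/s) = \omega(\log n)$, one per variable of $S$. For each $j$ I would build, out of $x_{i_j}$ and the block $B_j$, a ``parametrized'' version of the gadget in Theorem~\ref{thm:fix}: the update function $u_{x_{i_j}}$ reads all of $X \setminus S$ together with $B_j$, computes the bit $[c(\cdot)]_j$ from the current configuration of $X \setminus S$, and multiplies it by the majority vote of $B_j$ (so that $x_{i_j}$ can still recover from a flip of its own update function); the variables of $B_j$ play the role of the indicator variables, monitoring whether $x_{i_j}$ agrees with the bit it should carry and forming a robust majority. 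The start configuration is taken to be a fixpoint of the unmutated network, which one checks exists, analogously to the all-ones start of Theorem~\ref{thm:fix}.

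The first thing to verify is that the construction is \emph{well defined}, i.e.\ that there is no circular dependency between the bits $[c(y)]_j$ the gadgets install and the configuration $y$ of $X \setminus S$ that determines them. This is where the exact design of the gadget matters: I would arrange, exactly as in Theorem~\ref{thm:fix}, that after the gadget has settled the block $B_j$ sits at the all-$+1$ pattern on its unmutated coordinates and at $-1$ on its mutated ones --- possibly globally complemented, but only according to whether $u_{x_{i_j}}$ itself was flipped, and in particular \emph{independently of} $[c(y)]_j$. Consequently the final configuration $y$ of $X \setminus S$ is a deterministic function of the mutation pattern alone, so there is no fixed-point equation to solve: $c$ may be any choice function whatsoever, and at the fixpoint each $x_{i_j}$ reads the true final $y$ and is locked to $[c(y)]_j$, whence $x_S = c(y)$ and $f(y,c(y)) = 1$.

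The probabilistic part is then routine and parallels Theorems~\ref{thm:fix} and~\ref{thm:cyclic}: an $\eps$-mutation flips fewer than $m/2$ of the $m + O(1)$ update functions of gadget $j$ except with probability $e^{-\Omega(m)} = n^{-\omega(1)}$ by a Chernoff bound (using $m = \omega(\log n)$, which is where $s = o(n/\log n)$ enters), so a union bound over the $s \le n$ gadgets shows that with probability $1 - o(1)$ every gadget behaves correctly; on that event the network reaches a viable fixpoint in a bounded number of steps.

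The hard part, and the step I would spend the most effort on, is the convergence analysis during the first few steps, before $X \setminus S$ has stabilized. In those steps the quantity $[c(\cdot)]_j$ read by $u_{x_{i_j}}$ is computed from an incorrect configuration of $X \setminus S$ and can take an essentially arbitrary value, since a completion function guaranteed only by the hypothesis may depend wildly on its argument; a carelessly designed gadget would feed this garbage into $x_{i_j}$ and then into $B_j$, driving the indicator majority to the wrong sign and ending in a \emph{complemented}, non-viable fixpoint. The resolution I would pursue is to make the indicator variables of $B_j$ defer to $x_{i_j}$ only once $x_{i_j}$ is consistent with its own update on the current configuration --- which cannot happen until $X \setminus S$ has settled --- so that the transient is absorbed in $O(1)$ steps without ever corrupting the block's majority. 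Making this precise and bounding the number of steps to convergence is the technical core of the proof.
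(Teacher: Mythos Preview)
Your overall plan --- partition $X\setminus S$ into blocks, one per variable of $S$, let each block act as a robust mutation detector for its assigned variable, and have $x_{i_j}$ output $[c(y)]_j$ times the block majority --- is exactly the paper's approach, and you correctly isolate the one genuine difficulty: during the first few synchronous steps the configuration of $X\setminus S$ is still moving, so the bit $[c(\cdot)]_j$ being fed to $x_{i_j}$ is garbage, and that garbage could in turn corrupt the indicator block.  The gap is in your proposed resolution.  ``Defer to $x_{i_j}$ only once $x_{i_j}$ is consistent with its own update on the current configuration'' is circular: that update already depends on the majority of $B_j$, which is precisely what the indicators are trying to decide, and it also depends on $[c(\cdot)]_j$ evaluated on the current (moving) $X\setminus S$.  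There is no local rule here that can tell when this loop has closed, and as stated the gadget may well lock onto the complemented fixpoint you warn about.

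The paper cuts the knot with one extra ingredient you are missing: it splits $X\setminus S$ into $s{+}1$ blocks, not $s$, the extra block $S_0$ consisting of \emph{sync variables} whose update is simply ``$1$ iff every variable in $X$ equals $1$''.  The all-ones configuration is forced to be a fixpoint of the unmutated network by adding the same clause to every other update.  This gives a global one-shot clock: at time $t=2$ only mutated variables have flipped and the $S_0$-majority is still $+1$; at $t=3$ the unmutated sync variables drop to $-1$ and the $S_0$-majority is $-1$ forever after.  The indicator block $S_j$ is told to sample $x_{i_j}$ \emph{only while the $S_0$-majority is $+1$}, and to freeze (follow its own majority) once that majority is $-1$.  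Thus each $S_j$ records, at the unique instant $t=2$, exactly the bit ``was $u_{x_{i_j}}$ flipped?'', and from $t=3$ on all of $X\setminus S$ is fixed.  Crucially, the indicator blocks never look at $[c(\cdot)]_j$ at all --- they only look at $x_{i_j}$ once, at a moment when its value is determined by the mutation pattern alone --- so the transient problem you identify simply does not arise.  With $y$ now frozen, $x_{i_j}$ computes $[c(y)]_j$ times the (fixed) majority of $S_j$ and lands on $[c(y)]_j$ whether or not its own update was complemented.
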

\begin{proof}
We directly construct the network expressing $f$. We always choose the
all-one-configuration as the start configuration. We divide the
variables in $X \setminus S$ in $s+1$ blocks $S_0,\ldots,S_{s}$ of
size $t = \frac{n - |S|}{|S| + 1} = \omega(\log n)$. For each
$j=1,\ldots,s$ the block $S_j$ is assigned to the variable $x_{i_j}$,
while the variables in $S_0$ are sync variables. Figure
\ref{fig:cyclic2} illustrates the partitioning of the nodes.
\begin{figure}
\begin{center}
\includegraphics[height=6cm]{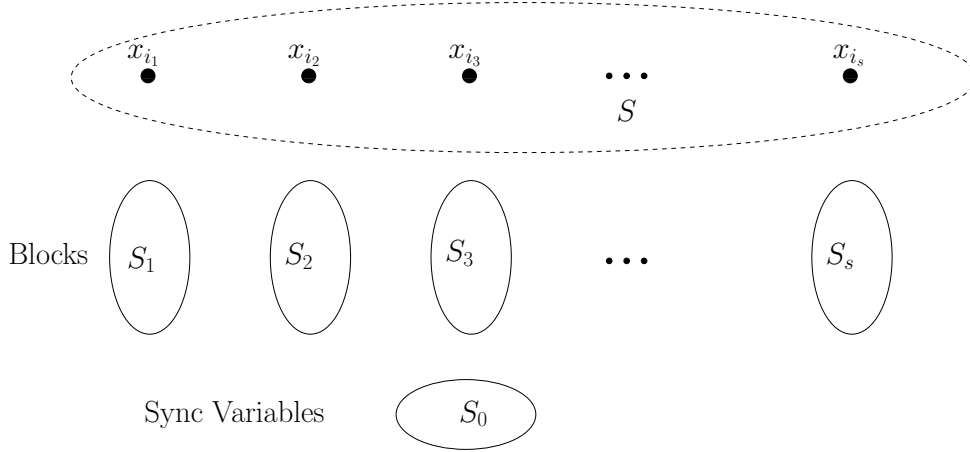}
\caption{Shows the partitioning of the nodes in the network
  in the proof for Theorem \ref{thm:maincyclic}.}
\label{fig:cyclic2}
\end{center}
\vspace{.5cm}
\end{figure}

The idea is that it is very unlikely that more than $\frac{t}{3}$ mutations occur in any of the blocks of size $t$. Having this in mind we want to proof that any network $N' \neq N$ which differs from $N$ in less than $\frac{t}{3}$ mutations in every block converges against a viable configuration.\\
The update function of every sync variable $x \in S_0$ is $1$ iff all variables in $X$ are $1$. For every $j=1,\ldots,s$ the update function for the variable $x_{i_j}$ is $1$ if every variable in $X$ is $1$ and else outputs the value of its variable in the assignment $x_{S,x_{X\setminus S}}$ times the value of the majority of variables in $S_j$ (i.e. inverting it if the majority is $-1$). Lastly for every $j=1,\ldots,s$ the update function of variables in a block $S_j$ are $-1$ if (the variable $x_{i_j}$ is $-1$ and the majority of $S_0$ is $1$) or if the majority of $S_j$ is $-1$.\\
This has the following effect: The all-one start configuration is stable in $N$ giving a $1$ in every unmutated variable for $t=2$ and a $-1$ in every mutated variable. Since $N \neq N'$ there is at least one variable $-1$ for $t=2$ giving a value of $-1$ for all unmutated variables in $S_0$ for $t=3$. Since the majority of variables in $S_0$ is not mutated the majority of $S_0$ is fixed to $-1$ for $t \geq 3$. The majority of variables in any other block $S_j$ is $1$ for $t=1$ and thus also for $t=2$ since $x_{i_j}(1) = 1$. At $t=3$ the all unmutated variables in $S_j$ update to the value of $x_{i_j}(2)$ which we have already argued is $-1$ iff this variable is mutated. Since the majority of $S_0$ is $-1$ for $t \geq 3$ the update function for the unmutated variables in $S_j$ reduced to being $-1$ iff the majority of them is $-1$. This inductively preserves the value of $x_{i_j}(2)$ for the majority of $S_j$ for all $t \geq 3$. Taking all this together shows that all variables in $X \setminus S$ are fixed for all $t \geq 3$. This gives fixed values for variables in $S$ for $t \geq 4$. Thus we have a fixpoint and we can verify indeed that for all unmutated variables $x_{i_j} \in S$ the majority of the corresponding block $S_j$ is $1$ and the value of $x_{i_j}$ in the fixpoint is the value in $x_{S,x_{X\setminus S}}$. If the variable $x_{i_j} \in S$ is mutated in $N'$ the majority of the corresponding block $S_j$ is $-1$ and the value of $x_{i_j}$ in the fixpoint is $-1 \cdot -1 \cdot x_{S,x_{X\setminus S}} = x_{S,x_{X\setminus S}}$ as well. By assumption the resulting fixpoint is a viable configuration.\\
This shows that if at most $\frac{t}{3}$ mutations occur in every of the $s+1$ blocks and at least one mutation occurs in total, the resulting network $N'$ reaches a viable configuration as a fixpoint. The probability that an $\eps$-mutation of $N$ does not fulfill this property is at most $(1 - c_\eps^{-\frac{t}{3}})^s > 1 - n c_\eps^{-\omega(\log n)} = 1 - o(1)$.
\end{proof}

The last two theorems prove the power of cyclic networks. They can robustly express many functions
which are out of reach for cyclic networks. To exemplify this we note that Theorem \ref{thm:fix} and 
respectively Theorem \ref{thm:fix} respectively allow us to fix values of one or $k \prec \log n$ variables. This allows us immediately to robustly express the dictator and $k$-junta functions which where shown to be non robustly expressable by acyclic networks at the end of Section \ref{sec:random}. Theorem \ref{thm:maincyclic} goes even further since it allows to fix variable values in hindsight. To demonstrate how big of an advantage this is, taking even $s=1$ suffices. Here Theorem \ref{thm:maincyclic} allows us to pick a variable, look at the development of all other values first and then choose how to set our variable. This allows us e.g. to robustly express parity and other more complicated functions robustly, even so their outcome depends on all variables. In fact while sublinear degree acyclic networks can not express most (randomly chosen) functions from $\calf_{n,\rho}$ even if $\rho$ is a constant (see Theorem \ref{thm:randlb}), Theorem \ref{thm:maincyclic} implies that with very high probability, a uniformly selected function from $\calf_{n,\rho}$ is robustly
expressibly by cyclic networks even with an exponentially low $\rho \geq 2^{-o(\frac{n}{\log n})}$. This means that in nearly all environments -- even with extremely sparse viable configurations -- robust cyclic networks exist and work reliably.

\section{Conclusion and Future Work}

In the absence of networks, robustness  can only be achieved if it is possible to set
the gene expression levels such that nearly all direct changes on these levels are still
viable. In different words, the static configuration of gene expression levels must be chosen
such that a large fraction of the of the configuration's Hamming neighborhood is viable.  What we have shown in this
paper is that using boolean networks, instead of just static configurations, allows us to shape this neighborhood induced by random mutations.  A
single mutation to the genotype can cause the phenotype configuration to change by a lot.  The choice of the update
functions in the boolean network defines now a differently shaped neighborhood of a phenotype configuration.  Studying acyclic networks, we
showed that the neighborhood induced by the update functions of an acyclic network is a bijective transformation of the usual
neighborhood in terms of Hamming distance.  This means that all the randomness or variability introduced by nature
is preserved but guided in the direction of still viable options. This is done by reshaping the neighborhood in such a way that most configurations in it are viable.  The greater the degree of the network, the more ``complex'' the bijection is allowed to be and thus the greater
possibility of robustly expressing objective functions.

In contrast to this cyclic networks can make the volume of the phenotype neighborhood smaller. That is, cyclic
networks can compress different mutations in genotypic space into the same change in phenotypic space.  This property of cyclic networks
makes them useful for fixing the value of nodes in the network - providing stability.
Intuitively, the reason cyclic networks can compress volumes in phenotypic space is that feedback loops provide nodes in the network
the power to detect whether they have been mutated. Using this feedback behavior of cyclic dynamics allows dramatically more powerful transformations and concentrating probability mass in a smaller volume of phenotypes helps to
protect highly critical parts of an organism.

Even though our work should be mostly understood as a first attempt to formalize and
better understand the nature of robustness, it nevertheless suggests nontrivial predictions
for biological systems. Our results in Section \ref{sec:cyclic} indicate that cyclic networks should be more
prevalent in biological systems where robustness is more important than variability.  For example, we
expect parts of an organism which are responsible for highly critical functions and survival to be regulated by
self-reinforcing feedback loops. The prevalence of cyclic networks should also occur in
general for organisms in rough environments allowing only specialized and well adjusted
organisms to survive. On the contrary, in parts of a biological system where evolvability
is highly desired, our results in Section \ref{sec:acyclic} indicate that there will be more
use of acyclic and pseudo-acyclic networks. This is for example the case in systems which
are friendly in the sense that they allow many viable phenotypes but are at the same time
rapidly changing. In such a situation the high evolvability of acyclic and pseudo-acyclic
networks maximizes the chance to adjust to changes by guiding the full randomness via the
control over their geometry in the phenotypic space, while their bad robustness behavior
does not harm the development.\\
Even though experimental research is still far away from determining the structure of large
regulatory networks, it is conceivable that the understanding that has been developed for a number of concrete small biological
systems will in the near future be available for a much larger class of regulatory networks. For the case that systematic experimental
evaluation will be able to extract explicit objective functions one could check whether they are robustly expressible by our definition. Even more
importantly, one could test the proposed model and its predictions by comparing the structure and robustness parameters of the networks our constructions gives (e.g. via the decision tree algorithm) with the actual networks occurring in nature.\\
Notwithstanding experimental validation, our work in this paper is
significant because it gives a rigorous language to examine the
robustness of the regulatory network.  Even if different notions of
mutation or different genotype-phenotype maps are used in other models
of the regulatory network, the basic questions asked in this paper are
still relevant and we suspect that for most reasonable setups, the
answers will be similar. Furthermore, the very idea that a robust
network could be designed  and  proved to be
robust (with respect to a given mutation model and genotype-phenotype
map) was, as far as we know,  not made explicit in previous work and
is an important conceptual contribution of our work.  This new line of
research leaves open several unresolved questions which we discuss next.

\subsubsection*{Future Work}

Beyond the results in this paper, we are interested in a better understanding of cyclic networks. How powerful are they in concentrating
the results of mutations in a smaller volume of the phenotype space? Mathematically
this question translates directly to the quest for upper bounds on the probability mass
$P(Fix(N_\eps,\alpha) \in S)$ that can be concentrated in an asymptotically small set $S$.
We also would like to get more examples of classes of objective functions for which robust networks 
are efficiently constructible, like the acyclic networks for symmetric functions. 

In future work it would also be interesting to analyze alternate mutation models, including ones where the boolean
network can be changed more drastically than here.  Is it still possible to obtain any robustness in such scenarios? Can similar results, gaps and tradeoffs between, robustness, network degree and structure be made?

Lastly we would like to advocate similar rigorous studies of other central ideas in biology.  For example, it would be
interesting to understand more precisely the ideas proposed in \cite{wag05b} that robustness promotes evolvability.  Can we develop a
formal model where we show rigorously that robustness to one objective function helps a system find solutions to related objective functions?

\section*{Acknowledgements}
A.B. would like to thank Gerald J. Sussman for introducing him to the subject and for sparking
an interest in formalizing biological concepts.  He would also like to thank Madhu Sudan and Victor 
Chen for valuable early conversations on a related project.  A substantial fraction of the work was 
completed while A.B. was at an internship at Sandia Laboratories, Livermore;
he would like to thank Rob Armstrong and Jackson Mayo for constructive and illuminating discussions and constant encouragement.
Finally, Manolis Kellis provided useful feedback during the latter stages of
the work.

\bibliographystyle{alpha}
\bibliography{robustness}

\appendix

\section{Classes of Robustly Expressible Functions\label{sec:ptf}}

Although Lemma \ref{lem:main} provides a tight condition for robust expressibility, it is not very natural, in the sense that the
condition cannot be verified easily unless the function is presented in a particular form.   In this section, we derive some more natural
conditions that guarantee robust expressibility.

We start with a useful definition.
\begin{definition}[Sequential Cover]
A bipartite graph $G = (V_1, V_2, E)$ with $|V_1| = m$ and $|V_2| = n $ is {\em sequentially
  coverable} if there exists a sequence of vertices $v_1,\dots,v_k \in V_2$ for some $k \leq n$ such
that the following two conditions hold: 
\begin{enumerate}
\item[(i)]
Every vertex $v \in V_1$ is a neighbor of some $v_i$
\item[(ii)]
Let $G_0 = G$.  For $i \in [k]$, inductively define $G_i$ as the induced graph on $G_{i-1}
\backslash (\{v_i\}\cup {\cal N}(v_i))$.  Each $v_i$ is a vertex of degree exactly $1$ in $G_{i-1}$.
\end{enumerate}
The sequence $v_1,\dots,v_k$ is called a {\em sequential cover of size $k$} for $G$.
\end{definition}

A bipartite graph is thus sequentially coverable if the vertices of
$V_2$ can be ordered in such a way that at most one vertex of $V_1$ is
covered at a time.  Note that $m \leq n$ necessarily if the graph is
sequentially coverable. Also, given a polynomial $p : \re^n \to \re$,
let $G_p$, the {\em term-variable   graph of $p$}, be the bipartite
graph with vertices for every variable and for every term, and with an
edge between a variable vertex and a term vertex iff the variable
occurs in the term.

\begin{theorem}\label{thm:main-neces}
If a function $f : \pcube^n \to \pcube$ has a sign-representation $\sgn(p(x_1,\dots,x_n))$, such
that $p(x)$ is a degree-$d$ polynomial with constant coefficients and such that its term-variable bipartite graph, $G_p$, has a
sequential cover of size $\Omega(n)$, then $f$ is robustly expressible by a boolean network of
degree $d-1$.
\end{theorem}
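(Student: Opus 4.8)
The plan is to use the algebraic characterization from Lemma~\ref{lem:main}. Given the sign-representation $f = \sgn(p(x_1,\dots,x_n))$ where $p$ has degree $d$, constant coefficients, and a sequential cover $v_1,\dots,v_k$ of size $k = \Omega(n)$ in the term-variable graph $G_p$, I want to produce a permutation $\pi$, update functions $\varphi_1,\dots,\varphi_n$ of arity at most $d-1$, and a function $g$ with $\mu_\eps(g) \geq 1 - o(1)$ such that $f(x) = g(x_{\pi(1)}\varphi_1(), \dots, x_{\pi(n)}\varphi_n(x_{\pi(1)},\dots,x_{\pi(n-1)}))$. The key idea is that each cover vertex $v_i$ is a variable that, at the moment it is ``peeled off'', is the \emph{unique} uncovered variable in some term $T_i$; so the other variables of $T_i$ (at most $d-1$ of them) have already been placed earlier in $\pi$, and we can use them to define $\varphi$ for $v_i$ so that the monomial $x_{v_i}\varphi_{(\cdot)}$ equals (up to the constant coefficient's sign) the value of $T_i$ under the ``all-ones-is-good'' convention.

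Concretely, I would order the nodes so that the variables of $V_1 \setminus \{v_1,\dots,v_k\}$ come first in arbitrary order (their update functions are constants, arity $0$), and then $v_1,\dots,v_k$ in that order; actually I must be careful about the covered neighbors — by condition (ii), when $v_i$ is peeled, its neighbor term vertices disappear, so the monomials $T_i$ partition nicely. For the $\Omega(n)$ variables $v_i$ in the cover, set $\varphi_{v_i}$ to be the product (a $\pm$-parity, hence a boolean function) of the already-placed variables appearing in $T_i$ other than $v_i$; this has arity at most $d-1$. Then the ``new coordinate'' $s_{v_i} = x_{v_i}\varphi_{v_i}(\dots)$ equals the monomial $T_i$ (its sign-part, ignoring the constant coefficient which we absorb into $g$). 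For the remaining $n - k = o(n)$... wait, $k = \Omega(n)$ only, so $n-k$ could still be $\Omega(n)$; but those remaining non-cover variables just get $\varphi \equiv 1$, so $s_j = x_j$ for them. Now define $g$ on the new coordinates: $g(s) = \sgn\!\big(\text{the polynomial $p$ rewritten in terms of the $s$-coordinates}\big)$, where each covered term becomes a single $\pm s_{v_i}$ (times its coefficient) and the uncovered-by-the-cover terms get expressed via the $x_j = s_j$ substitutions — but those uncovered terms only involve non-cover variables, of which there are at most $n - \Omega(n)$ relevant ones, and critically \emph{each cover step removes at least one term}, so the terms split into the $k$ ``peeled'' terms plus a bounded-structure remainder.

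The crux — and the main obstacle — is verifying $\mu_\eps(g) \geq 1 - o(1)$, i.e.\ that under the $\eps$-biased measure (each $s_i$ is $+1$ with probability $1-\eps$ independently) we have $\sgn(p) = 1$ with probability $1 - o(1)$. The point is that $\Omega(n)$ of the coordinates are the peeled monomials $s_{v_i}$, each appearing in $p$ with a constant coefficient, each independently $+1$ with probability $1-\eps > 1/2$; by a Chernoff bound the signed sum of these $\Omega(n)$ terms is $\Omega(n)$ with overwhelming probability, and since all coefficients are constants the total contribution of every other term is $O(n)$ but one must check it cannot overwhelm — here I would need the cover terms to dominate, which follows because each of the $\Omega(n)$ peeled monomials contributes a constant-sized positive bias in expectation while the remaining terms, being $o(n)$ in number after the peeling accounting (each peel kills $\geq 1$ term and covers $\leq 1$ new $V_1$-vertex, with condition (i) ensuring every variable is covered), contribute at most $o(n)$... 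Actually the honest statement is: arrange so that the $\Omega(n)$ dominating monomials, each with expectation bounded below by a positive constant and bounded variance, give a sum concentrated around $\Omega(n)$, and bound the leftover terms by their count times max coefficient; if the leftover count is also $\Theta(n)$ one needs the signs/coefficients to be controlled, so I expect the theorem really uses that after sequential covering the leftover terms number $o(n)$ or that the construction can be set up to make $p$ evaluated on the all-good point equal to $+\Theta(n)$ with small fluctuation. Pinning down exactly why the Chernoff concentration beats the error terms — i.e.\ the precise bookkeeping of how many terms survive the peeling and how the constant coefficients interact — is the step I expect to require the most care.
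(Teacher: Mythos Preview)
Your approach is essentially the paper's own: the paper builds the network directly rather than going through Lemma~\ref{lem:main}, but the network is the same --- for each cover variable $v_j$ with associated term $T_j$, the update function is (the sign of the coefficient times) the product of the other variables in $T_j$, so that at the fixpoint the signed term $c_jT_j$ equals $|c_j|$ when $v_j$ is unmutated and $-|c_j|$ when it is mutated. A Chernoff bound then finishes.

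However, your write-up contains a genuine directional error and an unnecessary worry, both stemming from the same misreading of the sequential-cover definition. You write that $v_i$ is ``the unique uncovered variable in $T_i$'' and hence that the other variables of $T_i$ were placed earlier. This is backwards. The degree-$1$ condition in the definition is on $v_i$, not on $T_i$: it says $T_i$ is the unique \emph{remaining term} containing $v_i$, not that $v_i$ is the unique remaining variable in $T_i$. In fact, if some earlier $v_j$ (with $j<i$) were a variable of $T_i$, then at step $j$ the vertex $v_j$ would be adjacent to both $T_j$ and $T_i$ in $G_{j-1}$, contradicting $\deg(v_j)=1$. So the other variables of $T_i$ lie in $\{v_{i+1},\dots,v_k\}\cup(\text{non-cover})$. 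Consequently the correct topological order for the acyclic network is: non-cover variables first, then $v_k,v_{k-1},\dots,v_1$ --- the \emph{reverse} of the peeling order. With your stated ordering the $\varphi$'s would depend on later variables and Lemma~\ref{lem:main} would not apply.

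The same observation dissolves your ``leftover terms'' concern. Since each $v_i$ has degree exactly $1$ in $G_{i-1}$, each peeling step removes exactly one term, and condition~(i) forces every term to be removed eventually; hence the number of terms equals the cover size, $m=k=\Omega(n)$, and \emph{every} term of $p$ is some $T_i$. There is no remainder to bound. After the sign-absorption you mention, $g(s)=\sgn\big(\sum_{i=1}^k |c_i|\, s_{v_i}\big)$, a weighted sum of $k=\Omega(n)$ independent $\eps$-biased $\pm 1$'s with $\Theta(1)$ weights; its expectation is $(1-2\eps)\sum_i|c_i|=\Omega(n)$ and a Chernoff/Hoeffding bound gives $\mu_\eps(g)\ge 1-2^{-\Omega(n)}$ directly, with no bookkeeping about surviving terms needed.
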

\begin{proof}
Let $f$ be a function of the variables $X = \{x_1,\dots,x_n\}$.  We construct a boolean net $N$ that
$\eps$-robustly expresses $f$ for some constant positive $\eps$.  Suppose that the term-variable
graph $G_p$ is sequentially covered by the sequence of variables $x_{i_1},\dots,x_{i_k} \in X$,
where each $i_j$ is a distinct element of $[n]$.  Let $x_{i_{k+1}},\dots,x_{i_n}$ denote the rest of
the variables (in some arbitrary order).  For $j \in [k]$, let $T_j$ denote the unique term covered
by the variable $x_{i_j}$.  Observe that for $j \in [k]$, $T_j$ can only contain the variables
$\{x_{i_\ell}\}_{\ell \geq j}$ and always contains $x_{i_j}$.  In the boolean network $N$, let the
update function for the node associated with $x_{i_j}$ be $u_{i_j} = \sgn(T_j/x_{i_j})$ for $j \in
[k]$ and let $u_{i_j}$ be an arbitrary element of $\pcube$ for $j \in \{k+1,\dots,n\}$.  It is clear
that $N$ is an acyclic boolean network.

We now show that $f$ is $\eps$-robustly expressed by $N$ for some
$\eps \in (0,\frac{1}{2})$.  Observe that with probability at least 
$1-2^{-\Omega(n)}$, at most $2\eps n$ mutations occur.  There are a total of $\Omega(n)$ terms in
$p$.  The terms which correspond to mutated nodes are strictly negative, while those which are not are
strictly positive, because of our choice of update functions.  Since all the coefficients of $p$ are
constant, for a small enough constant $\eps$, at most $2\eps n$ mutations will not be enough to make
$p$ evaluate to a negative real.  Hence, $N$ expresses $f$ with probability at least
$1-2^{-\Omega(n)}$.  
\end{proof}\\
We will say that a sign-representation is {\em acyclic} if the term-variable graph contains no
cycle.  This allows us to present a more natural class of functions that are robustly expressible.\\
\begin{theorem}\label{thm:tree-suff}
If a boolean function $f : \pcube^n \to \pcube$ has an acyclic constant-degree sign-representation
with constant coefficients and no degree-$1$ terms, then $f$ is
robustly expressible by an acyclic boolean network of constant degree. 
\end{theorem}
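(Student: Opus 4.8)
The plan is to derive this as a corollary of Theorem \ref{thm:main-neces} by showing that an acyclic (i.e.\ cycle-free) term-variable graph with no degree-$1$ terms automatically has a sequential cover of size $\Omega(n)$. First I would recall that $G_p = (V_1, V_2, E)$ has a term-vertex for each monomial and a variable-vertex for each $x_i$, and that ``acyclic'' means $G_p$ is a forest. The only additional input we need is that each term has degree at least $2$ in $G_p$ (no degree-$1$ terms means every monomial contains at least two distinct variables), and that the total degree $d$ is a constant, so every term-vertex has degree between $2$ and $d$.

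The key step is a counting/extremal argument on forests. In the forest $G_p$, let $m = |V_1|$ (number of terms) and $n = |V_2|$ (number of variables); since every variable must actually appear (else $f$ ignores it, and we may assume $n$ genuine variables) and $G_p$ is a forest, $|E| \le m + n - 1$. On the other hand $|E| \ge 2m$ since each term has degree $\ge 2$, so $m \le n - 1$; more importantly, $|E| \le dm$ gives nothing by itself, but combining $2m \le |E| \le m+n-1$ forces $m \le n-1$. To build the sequential cover greedily: repeatedly pick a term-vertex $T$ that currently has a private/pendant variable — i.e.\ a variable $x$ of degree $1$ adjacent only to $T$ in the current graph — add that variable $x$ to the cover sequence, then delete $x$ and $T$ and all of $T$'s neighbors (condition (ii) of sequential coverability asks exactly that the chosen $v_i$ has degree $1$ in $G_{i-1}$, i.e.\ we pick the \emph{variable}, reading off the unique term it covers). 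Each such step removes at most $d$ variables and exactly one term, and covers exactly one term-vertex. So I would argue that as long as some term remains, the remaining forest still has a leaf, and a leaf of a forest on the variable side whose unique neighbor is a term gives precisely the pendant-variable situation we need; a short case analysis (a leaf could be a term-vertex, but a term-vertex has degree $\ge 2$ unless it has been partially stripped, in which case it still has a pendant variable among its surviving neighbors) shows a valid pick always exists. Running this to exhaustion covers every term, so condition (i) holds, and the number $k$ of picks equals the number of terms $m$.

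The remaining point is the quantitative bound $k = m = \Omega(n)$. Here I would use that every variable appears in some term (degree $\ge 1$ on the variable side) and each term touches at most $d$ variables, so $n \le \sum_{T} \deg(T) = |E| \le dm$, giving $m \ge n/d = \Omega(n)$ since $d$ is constant. Thus $G_p$ has a sequential cover of size $\Omega(n)$, and Theorem \ref{thm:main-neces} applies to give a robustly expressing boolean network of degree $d-1$, which is constant. I expect the main obstacle to be the bookkeeping in the greedy deletion argument: one must be careful that after deleting a term and its neighbors, the surviving graph is still a forest (automatic, as induced subgraphs of forests are forests) and that no ``orphaned'' term ever becomes impossible to cover — but since we only ever delete terms that we have already covered, and every surviving term still has all of its original variables minus those consumed by \emph{other} covered terms, the acyclicity prevents two terms from fighting over the same pendant structure, so the greedy process cannot get stuck while a term remains. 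Making this last sentence fully rigorous (rather than hand-wavy) is the real content of the proof.
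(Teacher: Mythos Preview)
Your overall strategy is exactly the paper's: show that $G_p$ admits a sequential cover of size $\Omega(n)$ and then invoke Theorem~\ref{thm:main-neces}. Your edge double-counting bound $m \ge n/d$ is fine and equivalent to the paper's accounting.

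The gap is in your greedy deletion. You propose to delete $x$, $T$, \emph{and all of $T$'s other variable-neighbors}; this over-deletion is not what the definition of sequential cover does (it removes only $v_i$ and $\mathcal{N}(v_i)=\{T\}$), and it can orphan terms. Concretely, take the path
\[
x_3 \;-\; T_1 \;-\; x_1 \;-\; T \;-\; x_2 \;-\; T_2 \;-\; x_4
\]
with $T_1=x_1x_3$, $T=x_1x_2$, $T_2=x_2x_4$ (all degree-$2$ terms, acyclic $G_p$). Your greedy picks the pendant $x_3$ and deletes $\{x_3,T_1,x_1\}$; then picks the pendant $x_4$ and deletes $\{x_4,T_2,x_2\}$; now $T$ sits isolated and uncovered. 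So your assertion that ``acyclicity prevents two terms from fighting over the same pendant structure'' is false under your deletion rule, and the identification $k=m$ on which your $\Omega(n)$ bound rests does not follow.

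The fix is simply not to over-delete: remove only the leaf variable $v_i$ and its unique adjacent term $T_i$, exactly as the definition requires. The clean observation that replaces your hand-wavy last paragraph is then this: since $v_i$ had degree $1$, removing it affects only $T_i$'s degree, and removing the term $T_i$ only lowers \emph{variable}-degrees; hence every surviving term-vertex retains its full original degree $\ge 2$. Therefore every leaf of the remaining (nonempty) forest is necessarily a variable-vertex, and the process never stalls until all $m$ terms are covered. This is precisely the paper's argument, and with it your counting $k=m\ge n/d$ goes through.
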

\begin{proof}
We show that $f$ has a sign-representation whose term-variable graph has a sequential cover of size
$\Omega(n)$, thus proving our claim using Theorem \ref{thm:main-neces}.  Let $G$ be the 
term-variable bipartite graph for the given sign-representation for $f$.  Since $G$ is a forest by
assumption, there must exist some (at least $2$) degree-$1$ vertices.  Furthermore, because there
are no degree-$1$ monomials in the 
sign-representation, all the degree-$1$ vertices represent variables, not terms.  We construct $S$, a
sequential cover of $G$, as follows.  Initially $S$ is empty.  Select some degree-$1$ vertex $v$ in
$G$ and append it to $S$. Next, remove $v$ from $G$ and also all the vertices adjacent to $v$.  Note
that these adjacent vertices must represent terms.  The modified graph is still a forest and must
have some degree-$1$ vertices.  Again, the degree-$1$ vertices must represent variables, not terms.
Hence, we can repeat the process, appending a degree-$1$ vertex to $S$, remove it and its adjacent
vertices from $G$, and so on.  We stop when no vertices remain that represent terms.  \\
It is clear that $S$ is a sequential cover.  We only need to show that $S$ is of size $\Omega(n)$.
This is so because each time a vertex is added to the sequential cover, we remove the unique term
the associated variable is contained in, and this removal can make only a constant number of other
vertices isolated (because each term is of constant degree).  Hence, in order for all the variable
vertices to either be in $S$ or be isolated after the removal process, at least $\Omega(n)$ vertices
need to be in $S$. 
\end{proof}

\section{Follow-up to Corollary \ref{cor:diffstat}\label{sec:strongsep}}
One can strengthen Corollary \ref{cor:diffstat} so that the separation between the success probabilities
of the optimal acyclic network and the optimal static assignment is $1-2^{-\Omega(n)}$ (instead of $1-2^{-\Omega(\sqrt{n})}$).  
The construction of the objective function in the following proof was suggested by Madhu Sudan.

\begin{corollary}
There is a robustly expressible family of functions $f_n : \pcube^n \to \pcube$ that cannot be
robustly expressed by a static assignment.  For any constant $\eps > 0$ and for any static
assignment of $\{x_1,\dots,x_n\}$, the probability that an $\eps$-mutation of the assignment satisfies $f_n$
is at most $2^{-\Omega(n)}$, while there is a $(\eps,1-2^{-\Omega(n)})$-robust acyclic network with respect
to $f_n$.  
\end{corollary}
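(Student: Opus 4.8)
The plan is to amplify the construction behind Corollary~\ref{cor:diffstat}: instead of $\sqrt n$ gadgets of size $\sqrt n$ combined by a single threshold, I would use $\Theta(n)$ gadgets of \emph{constant} size, again combined by a threshold, so that a static assignment is forced to ``lose'' on an $\Omega(1)$ fraction of a \emph{linear} number of independent gadgets. Concretely, fix the constant $\eps\in(0,\tfrac12)$, choose an even constant $m=m(\eps)$ large enough (specified at the end), put $k=\lfloor n/m\rfloor$, and partition the variables into blocks $B_1,\dots,B_k$ of size $m$ (the $O(1)$ leftover variables get arbitrary constant update functions and are irrelevant). On a block of variables $(y_1,\dots,y_m)$ let $h(y)=\sgn\bigl(\sum_{j=2}^m y_1y_j\bigr)$ --- the same gadget $g_m$ as in Corollary~\ref{cor:diffstat}, whose sign-representation is acyclic, constant degree, and has no degree-$1$ terms (cf.\ Theorem~\ref{thm:tree-suff}) --- and define
$$f_n(x)\;=\;\sgn\Bigl(\textstyle\sum_{i=1}^k h(x|_{B_i})\;-\;(1-\eps)k\Bigr).$$

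For the network side I would take $N$ to be the disjoint union over $i\in[k]$ of the acyclic degree-$1$ gadget network for $h$ on $B_i$: inside each block $y_1$ has the constant update function $1$ and each $y_j$ ($j\ge2$) has update function $y_1$. One checks directly that this reaches a fixpoint in three steps and that at the fixpoint of an $\eps$-mutation one has $\sum_{j=2}^m y_1y_j=\sum_{j=2}^m\sigma_j^{(i)}$, where the $\sigma_j^{(i)}\in\pcube$ are independent across all $i,j$ and equal $-1$ with probability exactly $\eps$ (the value of $y_1$, hence whether $u_{y_1}$ was mutated, drops out). Thus $h(x|_{B_i})=1$ unless strictly more than half of the $m-1$ signs $\sigma_j^{(i)}$ are $-1$, an event of probability at most $e^{-c(\eps)m}$ for some constant $c(\eps)>0$, by a Chernoff bound. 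Since the $k$ gadgets are independent, a second Chernoff bound over the values $h(x|_{B_i})$ shows $\sum_i h(x|_{B_i})\ge(1-\tfrac\eps2)k>(1-\eps)k$, i.e.\ $f_n=1$, except with probability $e^{-\Omega(k)}=2^{-\Omega(n)}$; so $N$ is an acyclic $(\eps,1-2^{-\Omega(n)})$-robust network for $f_n$, and in particular $\{f_n\}$ is robustly expressible.

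For the static side, let $\beta\in\pcube^n$ be arbitrary and look at one block; writing $(y_1,\dots,y_m)$ for the $\eps$-noisy version of $\beta|_{B_i}$ and $S=\sum_{j=2}^m y_j$, independence of $y_1$ and $S$ gives
$$\Pr[h=1]=(1-\eps)\Pr[\beta_1S\ge0]+\eps\,\Pr[\beta_1S\le0]=(1-2\eps)\Pr[\beta_1S\ge0]+\eps+\eps\Pr[S=0].$$
Because $m$ is even, $S$ is a sum of an odd number of $\pm1$'s, so $\Pr[S=0]=0$ and hence $\Pr[h(x|_{B_i})=1]\le1-\eps$ \emph{uniformly over $\beta$ and $i$}. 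The $k$ gadget values are independent, so $\sum_i h(x|_{B_i})$ is stochastically dominated by a sum of $k$ i.i.d.\ $\{-1,+1\}$-variables of mean $1-2\eps$, and a Chernoff bound gives $\sum_i h(x|_{B_i})<(1-\eps)k$, i.e.\ $f_n=-1$, except with probability $2^{-\Omega(n)}$. Hence no static assignment achieves survival probability larger than $2^{-\Omega(n)}$, which is the desired strengthening.

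The only genuine work is the bookkeeping with constants: one must pick $m=m(\eps)$ large enough that the per-gadget success probability $1-e^{-c(\eps)m}$ of $N$ exceeds $1-\tfrac\eps4$, so that after averaging over the $k=\Theta(n)$ independent gadgets the network lands strictly on the ``$\ge(1-\eps)k$'' side of the threshold while \emph{every} static assignment (per-gadget success $\le1-\eps$) lands strictly on the ``$<(1-\eps)k$'' side; both one-sided deviations are $\Omega(k)=\Omega(n)$ and so cost only $2^{-\Omega(n)}$ by Chernoff. Everything else --- acyclicity of $N$, reaching a fixpoint in three steps, and the parity argument killing $\Pr[S=0]$ --- is immediate.
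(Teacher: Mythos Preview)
Your argument is correct and takes a genuinely different route from the paper's. You amplify Corollary~\ref{cor:diffstat} directly, using $k=\Theta(n)$ disjoint copies of the constant-size gadget $g_m$ combined by a threshold; the static side then reduces to a clean per-block bound $\Pr[h=1]\le 1-\eps$ (the parity trick forcing $\Pr[S=0]=0$ is what makes this tight) followed by an ordinary Chernoff bound over $k$ independent blocks. The paper instead uses a single function $f_n(x)=\sgn\bigl(\sum_{i=1}^n x_1x_2\cdots x_i - n/4\bigr)$ with no block structure: the network side comes from the sequential-cover machinery of Theorem~\ref{thm:main-neces}, while the static side requires observing that under i.i.d.\ noise the partial products $y_i=x_1\cdots x_i$ form a two-state Markov chain and then invoking a spectral-gap concentration inequality for sums along a Markov trajectory. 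Your route is more elementary (no Markov-chain tail bound needed) and, as a bonus, produces a degree-$1$ acyclic network, whereas the paper's network has unbounded degree. On the other hand, the paper's $f_n$ is defined independently of $\eps$ and its static-side failure holds simultaneously for every $\eps\in(0,\tfrac12)$, while your family is tailored to the given $\eps$ through both $m=m(\eps)$ and the threshold $(1-\eps)k$; this matches the quantifier order of Corollary~\ref{cor:diffstat} that you are strengthening, so it is a tradeoff rather than a defect.
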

\begin{proof}
For each $n \geq 1$, consider the function $f_n : \pcube^n \to \pcube$ where $f_n(x_1,\dots,x_n) =
\sgn(x_1 + x_1x_2 + x_1x_2x_3 + \cdots + x_1x_2\cdots x_n - \frac{n}{4})$.  Noting that the
term-variable graph of $p(x) \defeq x_1 + x_1x_2 + x_1x_2x_3 + \cdots + x_1x_2\cdots x_n$ is sequentially covered by the sequence 
$x_n,\dots,x_1$, it follows by a probabilistic argument similar to the one in the proof of Theorem
\ref{thm:main-neces}, that the function $f_n$ is $\eps$-robustly expressible for a small enough
constant $\eps$. 

On the other hand, we next show that $f_n$ cannot be robustly expressed by any static assignment.  Fix a static
assignment for $f_n$, and consider an $\eps$-mutation of it.   Then, each $x_i$ is an independent
random variable that acquires $-1/1$ with probability $1-\eps$ and $1/-1$ with probability $\eps$.
For $i \in \{1,\dots,n\}$, let $y_i = x_1x_2\cdots x_i$.  Now, $p(x) = \sum_i y_i$, and therefore,
$|\Exp{}{[p(x)]}| = \sum_i |\Exp{}{y_i}| \leq \sum_i (1-2\eps)^i \leq \frac{1-2\eps}{2\eps}$, a constant.  We
need to bound the concentration around this mean.  Note that the $y_i$'s are not independent; instead, their generation process is exactly captured by the well studied memoryless Markov process. (for an introduction to Markov processes and eigenvalues see \cite{meyn1993mca}) The Markov process on hand is characterized by $\Pr[y_i = a_i | y_{i-1}=a_{i-1}]$ and is specified by a 2-by-2 stochastic matrix, either $\left( \begin{array}{cc}
1-\eps & \eps \\
\eps & 1-\eps  \end{array} \right)$ or $\left( \begin{array}{cc}
\eps & 1-\eps \\
1-\eps & \eps  \end{array} \right)$.  The eigenvalue gaps of these two matrices are $2\eps$
and $2(1-\eps)$ respectively.  By a concentration bound on the sum of elements generated by a Markov
chain with eigenvalue gap $\delta$, given in Theorem 4.23 of \cite{DubhashiPanconesi}, we have that
$\Pr\left[|\Exp{}{\left[\sum_iy_i\right]}- \sum_i y_i| > n/8\right] \leq 2^{-\Omega(\delta n)}$.  So, with probability
at least $1-2^{-\Omega(n)}$, $\sum_i y_i < n/4$ and $f_n$ is not satisfied.
\end{proof}

\end{document}